\definecolor{shadecolor}{rgb}{0.9,0.9,0.9}
\newcommand{\G}{\mathcal{G}}
\newcommand{\COMMENTED}[1]{}
\newcommand{\ALG}{\textsc{Alg}}
\newcommand{\OPT}{\textsc{Opt}}
\newcommand{\Cost}{\text{Cost}}
\newcommand{\size}{\text{size}}
\newcommand{\Epoch}{\mathcal{E}}
\newcommand{\BW}{\omega}
\newcommand{\LAT}{\lambda}
\newcommand{\counter}{\textsc{C}}
\newcommand{\counterLarge}{\textsc{C}$_L$}
\newcommand{\RAND}{\textsc{Mix}}
\newcommand{\DET}{\textsc{Cen}}
\newcommand{\PRAND}{\textsc{Mix}}
\newcommand{\PDET}{\textsc{Cen}}
\newcommand{\STAT}{\textsc{Stat}}
\newtheorem{theorem}{Theorem}[section]
\newtheorem{lemma}[theorem]{Lemma}
\newcommand{\ignore}[1]{}
\begin{document}





\title{Online Strategies for Intra and Inter Provider\\Service Migration in Virtual Networks}

\author {
   Dushyant Arora$^1$, Marcin Bienkowski$^2$, Anja Feldmann$^1$,
   Gregor Schaffrath$^1$, Stefan
   Schmid$^1$\\
\small{$^1$ Deutsche Telekom Laboratories / TU Berlin, Germany}\\
\small{\texttt{\{darora,anja,grsch,stefan\}@net.t-labs.tu-berlin.de}}\\
\small{$^2$ Institute of Computer Science, University of Wroc{\l}aw,
Poland};
\small{\texttt{mbi@ii.uni.wroc.pl}}\\
}

\date{}

\maketitle

\sloppy


\begin{abstract}
Network virtualization allows one to build dynamic distributed
systems in which resources can be dynamically allocated at locations
where they are most useful. In order to fully exploit the benefits
of this new technology, protocols need to be devised which react
efficiently to changes in the demand. This paper argues that the
field of online algorithms and competitive analysis provides useful
tools to deal with and reason about the uncertainty in the request
dynamics, and to design algorithms with provable performance
guarantees.

As a case study, we describe a system (e.g., a gaming application)
where network virtualization is used to support thin client
applications for mobile devices to improve their QoS. By decoupling
the service from the underlying resource infrastructure, it can be
migrated closer to the current client locations while taking into
account migration cost. This paper identifies the major cost factors
in such a system, and formalizes the corresponding optimization
problem. Both randomized and deterministic, gravity center based
online algorithms are presented which achieve a good tradeoff
between improved QoS and migration cost in the worst-case, both for
service migration within an infrastructure provider as well as for
networks supporting cross-provider migration. The paper reports on
our simulation results and also presents an explicit construction of
an optimal offline algorithm which allows, e.g., to evaluate the
competitive ratio empirically.
\end{abstract}

%
%

\section{Introduction}
\label{sec_intro}

In 2008, the total number of mobile web users outgrew the total
number of desktop computers with respect to Internet
users~\cite{itu09} for the first time.  Providing high
quality-of-service (QoS) respective an excellent quality of
experience (QoE) to mobile Internet clients is very challenging,
e.g., due to user mobility. However, many applications, including
such popular applications as gaming, need a reliable, continuous
network service with minimal delay.

\emph{Network virtualization}~\cite{virsurvey} is an emerging
technology which allows a service specific network to be embedded
onto a substrate network in a dynamic fashion.  This includes
migration of virtual nodes and links as well as virtual servers to
meet the applications demands for connectivity and performance. To
take full advantage of this flexibility it is often necessary to
know future application demands. Yet, this information is typically
not readily available and therefore neither the network resources
can be used in an optimal manner nor do the users receive the best
possible service.

This paper studies a mobile thin client application (such as a
\emph{game server}~\cite{oss}) that is supported by network
virtualization technology. We assume that the distribution of thin
clients and therefore the request pattern changes over time. For
instance, at 1 a.m.~GMT many requests may originate in Asian
countries, then more and more requests come from European users and
later from the United States. In this setting it can be beneficial
to migrate (or \emph{re-embed}) the service closer to the users,
e.g., to minimize access delays for the users and to minimize
network costs for the providers~\cite{visa09mig}. Network
virtualization allows us to realize such networks.

While moving services close to clients can reduce latency, migration
also comes at a cost: the bulk-data transfer imposes load on the
network and may cause a service disruption. In particular, the cost
of migration depends on the available bandwidth in the substrate
network~\cite{vee07}. Moreover, if virtual networks (VNets) are
provisioned across administrative domains belonging to multiple
infrastructure providers, (inter-provider) migration entails certain
transit (or \emph{roaming}) costs.

To gain insights into this tradeoff, we identify the main costs
involved in this system. Intuitively, the benefits from
virtualization are higher the lower the migration cost is relative
to the latency penalty. Therefore, a predictable access pattern may
ease migration. However, in practice, user arrival patterns are hard
to predict, and thus we, in this paper, explicitly incorporate
uncertainty about future arrivals.

The classic formal tool to study algorithms that deal with inputs
(or more specifically: request accesses) that arrive in an online
fashion and cannot be predicted is the \emph{competitive analysis
framework}. In competitive analysis, the performance of a so-called
online algorithm is compared to an optimal offline algorithm that
has complete knowledge of the input \emph{in advance}. In effect,
the competitive analysis is a \emph{worst case performance analysis}
that does not rely on any statistical assumptions. We apply this
framework to network virtualization and propose---for a simplified
model where, e.g., the main access cost is delay to the server and
the main migration cost is the available bandwidth between migration
source and destination---a competitive migration algorithm whose
performance is close to the one of the optimal offline algorithm.

\subsection{Related Work} \label{sec_relwork}

\textbf{Mobile Networks.} The mobile web today provides
browser-based access to the Internet or web applications to millions
of users connected to a wireless network via a mobile device. There
exists a vast amount of work on the subject, and we refer the reader
to the introductory books, e.g.,~\cite{mbook2}. In this project, we
tackle the question of how \emph{network virtualization} can be used
to improve the quality of service for mobile devices.

\textbf{Network Virtualization.} Network virtualization has gained a
lot of attention recently~\cite{geni} as it enables the co-existence
of innovation and reliability~\cite{visa09virtu} and promises to
overcome the ``ossification'' of the Internet~\cite{ossification}.
For a more detailed survey on the subject, please refer
to~\cite{virsurvey}. Virtualization allows to support a variety of
network architectures and services over a shared substrate, that is,
a \emph{Substrate Network Provider (SNP)} provides a common
substrate supporting a number of \emph{Diversified Virtual Networks
(DVN)}. OpenFlow~\cite{openflow} and VINI~\cite{Bavier06} are two
examples that allow researchers to (simultaneously) evaluate
protocols in a controllable and realistic environment.
Trellis~\cite{trellis08} provides a software platform for hosting
multiple virtual networks on shared commodity hardware and can be
used for VINI. Network virtualization is also useful in data center
architectures, see, e.g., \cite{secondnet}.

\textbf{Embedding.} A major challenge in network virtualization is
the \emph{embedding}~\cite{embedding} of VNets, that is, the
question of how to efficiently and on-demand assign incoming service
requests onto the topology. Due to its relevance, the embedding
problem has been intensively studied in various settings, e.g., for
an offline version of the embedding problem see~\cite{turner}, for
an online and competitive algorithm see~\cite{moti10tr}, for an
embedding with only bandwidth constraints see~\cite{ammar}, for
heuristic approaches without admission control see~\cite{zhu06}, or
for a simulated annealing approach see~\cite{simannealing}. Since
the general embedding problem is computationally hard, Yu et
al.~\cite{rethinking} advocate to rethink the design of the
substrate network to simplify the embedding; for instance, they
allow to split a virtual link over multiple paths and perform
periodic path migrations. Lischka and Karl~\cite{pb-embed} present
an embedding heuristic that uses backtracking and aims at embedding
nodes and links concurrently for improved resource utilization. Such
a concurrent mapping approach is also proposed in~\cite{infocom2009}
with the help of a mixed integer program (a more general formulation
which also includes migration aspects can be found
in~\cite{emb10TR}). Finally, several challenges of embeddings in
wireless networks have been identified by Park and
Kim~\cite{wembed}.

In contrast to the approaches discussed above we, in this paper,
tackle the question of how to dynamically embed or migrate virtual
servers~\cite{mobitopolo} in order to efficiently satisfy connection
requests arriving online at any of the network entry points, and
thus use virtualization technology to improve the quality of service
for mobile nodes. The relevance of this subproblem of the general
embedding problem is underlined by Hao et al.~\cite{visa09mig} who
show that under certain circumstances, migration of a Samba
front-end server closer to the clients can be beneficial even for
bulk-data applications.

\textbf{Online Algorithms.} To the best of our knowledge,
\cite{visa10} and \cite{hotice11} (for online server migration), and
\cite{moti10tr} (for online virtual network embeddings) are the only
works to study network virtualization from an online algorithm
perspective. The formal competitive migration problem is related to
several classic optimization problems such as facility location,
$k$-server problems, or online page migration. All these problems
are a special case of the general \emph{metrical task system}
(e.g.,~\cite{borodin,metricaltask}) for which there is, e.g., an
asymptotically optimal deterministic $\Theta(n)$-competitive
algorithm, where $n$ is the state (or ``configuration'') space; or a
randomized $O(\log^2 n\cdot \log\log{n})$-competitive algorithm
given that the state space fulfills the triangle inequality: this
algorithm uses a (well separated) tree approximation for the general
metric space (in a preprocessing step) and subsequently solves the
problem on this distorted space; unfortunately, both algorithmic
parts are rather complex.

In the field of \emph{facility location}, researchers aim at
computing optimal facility locations that minimize building costs
and access costs (see, e.g.,~\cite{competitivefl} for an online
algorithm). In \cite{georgiosmigration}, Laoutaris et al.~propose a
heuristic algorithm for a variant of a facility location problem
which allows for facility migration; this algorithm uses
neighborhood-limited topology and demand information to compute
optimal facility locations in a distributed manner. In contrast to
our work, the setting is different and migration cost is measured in
terms of hop count. There is no performance guarantee. In the field
of \emph{$k$-server problems} (e.g.,~\cite{borodin}), an online
algorithm must control the movement of a set of $k$ servers,
represented as points in a metric space, and handle requests that
are also in the form of points in the space. As each request
arrives, the algorithm must determine which server to move to the
requested point. The goal of the algorithm is to reduce the total
distance that all servers traverse. In contrast, in our model it is
possible to access the server remotely, that is, there is no need
for the server to move to the request's position. The \emph{page
migration problem} (e.g., \cite{dynpmsurvey}) occurs in managing a
globally addressed shared memory in a multiprocessor system. Each
physical page of memory is located at a given processor, and memory
references to that page by other processors are charged a cost equal
to the network distance. At times, the page may migrate between
processors, at a cost equal to the distance times a page size
factor. The problem is to schedule movements on-line so as to
minimize the total cost of memory references. In contrast to these
page migration models, we differentiate between access costs that
are determined by latency and migration costs that are determined by
network bandwidth.

There is an intriguing relationship between server migration and
\emph{online function tracking}~\cite{swat10,soda09}. In online
function tracking, an entity Alice needs to keep an entity Bob
(approximately) informed about a dynamically changing function,
without sending too many updates. The online function tracking
problem can be transformed into a chain network where the function
values are represented by the nodes on the chain, and a sequence of
value changes corresponds to a request pattern on the chain. In
particular, it follows from~\cite{swat10} that already for some very
simple linear substrate networks of size $n = \Theta(\beta)$, where
$\beta$ is the migration cost, no deterministic or randomized online
algorithm can achieve a competitive ratio smaller than $\Omega(\log
n / \log \log n)$.

\textbf{Other.} A preliminary version of this paper was presented at
the VISA workshop~\cite{visa10} (a discussion of multiple server
scenarios appeared at the Hot-ICE workshop~\cite{hotice11}). We
extend the results in~\cite{visa10} in several respects: First, we
present a deterministic alternative to the randomized migration
strategy derived in~\cite{visa10}; this deterministic online
algorithm is based on gravity centers and achieves a better
competitive ratio against adaptive adversaries. Second, we extend
our model to settings with multiple infrastructure providers, and
describe competitive intra and inter provider migration strategies.
Provisioning virtual network services across multiple providers is
an interesting topic which is hardly explored in literature so far;
notable exceptions are \emph{PolyVINE}~\cite{polyvine}, a
distributed coordination protocol to perform cross-provider
embeddings, and \emph{V-Mart}~\cite{vmart} that describes an auction
framework for task partitioning. Third, we conducted extensive
experiments to complement our formal analysis.

\subsection{Contributions and Organization}

This paper studies a mobile network virtualization architecture
where thin clients on mobile devices access a service that can be
migrated closer to the access points to reduce user latency. We
identify the main costs in this system (Section~\ref{model}) and
introduce an optimization problem accordingly. Both for networks
with a single provider (Section~\ref{sec:intra}) as well as for
networks with multiple providers (Section~\ref{sec:inter}), online
migration strategies are presented that are provably competitive to
an optimal offline algorithm, i.e., that achieve a good performance
even in the worst-case. This paper also describes an optimal offline
algorithm which is useful for finding optimal strategies at
hindsight, for dealing with regular and periodic request patterns,
and for computing the competitive ratios of online algorithms
(Section~\ref{sec:optoff}). We report on our experiments in
different scenarios in Section ~\ref{sec:simulations}. In
Section~\ref{sec:conclusion}, the paper concludes.

\section{Architecture}\label{model}

Our work is motivated by the virtualization architecture proposed
in~\cite{visa09virtu} for which we are in the process of developing
a prototype implementation. The main roles of this architecture
related to this work are: The \emph{(Physical) Infrastructure
Provider (PIP)}, which owns and manages an underlaying physical
infrastructure called ``substrate'' (we will treat the terms
\emph{infrastructure provider} and \emph{substrate provider} as
synonyms); the \emph{Virtual Network Provider (VNP)}, which provides
bit-pipes and end-to-end connectivity to end-users; and the
\emph{Service Provider (SP)}, which offers application, data and
content services to end-users.

We assume that a service provider is offering a service to mobile
clients which can benefit from the flexibility of network and
service virtualization. The goal of the service provider is to
minimize the round-trip-time of its service users to the servers, by
triggering migrations depending, e.g., on (latency) measurements.
Concretely, VNP and/or PIPs will react on the SP-side changes of the
requirements on the paths between server and access points, and
re-embed the servers accordingly.

In the remainder of this paper, if not stated otherwise, the term
\emph{provider} will refer to the PIP role in the above
architecture. In particular, we will study multi provider scenarios
where the service provider may decide to migrate an application
across PIP boundaries.

\subsection{General Cost Model}

Formally, we consider a substrate network $G=(V,E)$ managed by one
or multiple substrate providers (PIP). Each substrate node $v\in V$
has certain properties and features associated with it (e.g., in
terms of operating system or CPU power); in particular, we assume
that it has a computational capacity $c(v)$. Similarly, each link
$e=(u,v)\in E$, with $u,v\in V$, has certain properties, e.g., it is
characterized by a bandwidth capacity $\BW(e)$, and it offers the
latency $\LAT(e)$. Links between different PIPs are typically more
expensive than links within a PIP.

In addition to the substrate network, there is a set $T$ of external
machines (the mobile thin clients or simply \emph{terminals}) that
access $G$ by issuing requests to virtualized services hosted on a
set of virtual servers by $G$. There is a set of services
$\mathcal{S}=\{S_1, S_2, ...\}$ where each service $S_i$ can be
offered by multiple servers $s\in S_i$. (In the technical part of
the paper we will focus on a single-server scenario only.) Each
server $s$ has a certain resource or capacity requirement $r(s)$
that needs to be allocated to $s$ on the substrate node where it is
hosted.

In order for the machines in $T$ to access the services
$\mathcal{S}$, a fixed subset of nodes $A\subseteq V$ serve as
\emph{Access Points} where machines in $T$ can connect to $G$. Due
to the movement of machines in $T$, the access points can change
frequently, which may trigger the migration algorithm. We define
$\sigma_t$ to be the multi-set of requests at time $t$ where each
element is a tuple $(a\in A, S\in \mathcal{S})$ specifying the
access point and the requested service $S$. (For ease of notation,
when clear from the context, we will sometimes simply write $v\in
\sigma_t$ to denote the multi-set of access points used by the
different requests.) Our main objective is to shed light onto the
trade-off between the access costs $\Cost_{\text{acc}}$ of the
mobile clients to the current service locations and the server
migration cost $\Cost_{\text{mig}}$: while moving the servers closer
to the requester may reduce the access costs and hence improve the
quality of service, it also entails the overhead of migration.


We can identify the following main parameters which influence the
access and migration costs. A major share of $\Cost_{\text{acc}}$ is
due to the request latency, i.e., the sum of the requests' latencies
to the corresponding servers. Observe that the routing of the
requests occurs along the shortest paths (w.r.t.~latency) on the
substrate network. In addition, the access cost depends on the
server load, that is, the access cost depends on the capacity $c(v)$
of the hosting node $v$ and the resource demands $r(s)$ of the
servers $s$ hosted by $v$. The correlation between load and delay
can be captured by different functions, and is not studied further
here. In this paper, we assume that requests are relatively small,
and hence, we do not explicitly model bandwidth constraints in
$\Cost_{\text{acc}}$. In conclusion, at time $t$ and for some
function $f$,
$$
\Cost_{\text{acc}}(t) = \sum_{r_t\in \sigma_t}
f\left(\text{delay}(r_t), \text{load}(r_t)\right).
$$

In contrast to the requests, which are rather light-weight, the
server state is typically large, and hence the traffic volume of
migration cannot be neglected. The main cost of migration are
service outage periods and the migration itself. The migration cost
$\Cost_{\text{mig}}$ of a virtual server $s\in S$, or the outage
period, hence depends to a large extent on the available bandwidth
$\BW(p)$ on the migration path $p: src \rightsquigarrow dst$ (along
the substrate network) between migration source node $src$ and
destination node $dst$, and the size $\text{size}(s)$ of the
application $s$ to be migrated. Another major cost factor is the
\emph{transit costs}, namely the number $k$ of PIPs on the path.
In summary:
$$
\Cost_{\text{mig}}(t) = \sum_{s\in S} f(\BW(p),k,\text{size}(s))
$$
for some function $f$, where the migration cost is zero if
$src=dst$.

Our model so far lacks one additional ingredient: \emph{terminal
dynamics} (or mobility). One approach would be to assume arbitrary
request sets $\sigma_t$, where $\sigma_t$ is completely independent
of $\sigma_{t-1}$. However, for certain applications it may be more
realistic to assume that the mobile nodes move ``slowly'' between
the access points. Note that while users typically travel between
different cities or countries at a limited speed, these geographical
movements may not translate to the topology of the substrate
network. Thus, rather than modeling the users to travel along the
links of $G$, we consider on/off models where a user appears at some
access point $a_1\in A$ at time $t$, remains there for a certain
period $\Delta t$, before moving to another arbitrary node $a_2\in
A$ at time $t+\Delta t$.

One may assume that $\Delta t$ is exponentially distributed.
However, in our formal analysis we assume a worst-case perspective
and consider arbitrary distributions for $\Delta t$. Often, it is
reasonable to assume some form of correlation between the individual
terminals' movement. For example, in an urban area, workers commute
downtown in the morning and return to suburbs in the evening. Or in
a planetary-scale substrate network, demographic aspects have to be
taken into account in the sense that during a day, first many
requests will originate from Asian countries, followed by an active
period in Europe and finally America. However, as it is rather hard
to describe and characterize such movement accurately we, in the
formal part of this paper, perform a worst case analysis
(w.r.t.~latency) that does not use any statistical assumptions.

To what extent the system can benefit from virtual network support
and migration depends on several factors, e.g., how frequently the
thin clients change the access points. Given rapid changes it may be
best to place the server in the middle of the network and leave it
there. On the other hand, if the changes are slower or can be
predicted, it can be worthwhile to migrate the server to follow the
mobility pattern. This constitutes the trade-off studied in this
paper.

\subsection{Competitive Analysis}\label{companal}

As already discussed, competitive analysis asks the question: How
well does the system perform compared to an optimal offline strategy
which has complete knowledge of the entire request sequence in
advance? In the following, we present an online migration strategy
that is ``competitive'' to any other online or offline solution for
virtual network supported server migration. In order to focus on the
main properties and trade-offs involved in the virtualization
support of thin clients, we assume a simplified online framework for
our formal analysis. We consider a synchronized setting where time
proceeds in time slots (or \emph{rounds}).\footnote{Note that while
this assumption simplifies the analysis, it is not critical for our
results.} In each round $t$, a set of $\sigma_t$ terminal requests
arrive in a worst-case and online fashion at an arbitrary set of
access nodes $A$.

Thus the embedding problem is equivalent to the following
synchronous game, where an online algorithm $\ALG$ has to decide on
the migration strategy in each round $t$, without knowing about the
future access requests.  In each round $t\geq 0$:
\begin{enumerate}

\item[1.] The requests $\sigma_t$ arrive at some access nodes $A$.

\item[2.] The online algorithm $\ALG$ decides where in $G$ to migrate the
servers $S$. If positions are changed, it pays migration costs
$\Cost_{\text{mig}}(t)$.

\item[3.] The online algorithm $\ALG$ pays the requests' access costs $\Cost_{\text{acc}}(t)$ to the corresponding servers (e.g, hop distance).
\end{enumerate}

Note, that we allow $\ALG$ to migrate the virtual servers for all
the requests of the current time slot $t$. However, as we assume
that a request is much cheaper than a migration, and if there are
not too many requests arriving concurrently, our results also apply
to scenarios where the last two steps are reordered.

We aim at devising competitive algorithms $\ALG$ that minimize the
\emph{competitive ratio} $\rho$:
Let $\ALG_t(\sigma)$ be the migration and access costs incurred by
$\ALG$ in round $t$ under a request arrival sequence $\sigma$ (a
sequence of access points), that is,
$$
\ALG_t(\sigma)=\Cost_{\text{acc}}(t)+ \Cost_{\text{mig}}(t).
$$

Let $\OPT(\sigma)$ be the optimal cost of an offline algorithm
$\OPT$ for the given $\sigma$, that is, $\OPT$ has a complete
knowledge of $\sigma$ and can hence optimize the server locations
``offline''. $\rho$ is the ratio of the costs of $\ALG$ and $\OPT$.
Thus, our objective is to minimize:
$$
\rho = \max_{\sigma} \frac{\sum_t \ALG_t(\sigma)}{\OPT(\sigma)}
$$
In case of online algorithms that use randomization, we consider the
expected costs against an oblivious adversary without access to the
outcome of the random coin flips of the algorithm.

For our analysis, we make the following simplification: The
migration cost $\Cost_{\text{mig}}(t)$ is given by the bandwidth
constraint of the smallest edge capacity on the migration path, plus
the number of PIPs traversed times $\pi$. Let
$\Cost_{\text{mig}}(u,v)$ denote the migration cost on a path from
$u$ to $v$. (The path will be clear from the context.) Thus,
$\Cost_{\text{mig}}(u,v)=\max_e \size(s)/\BW(e)+k\cdot\pi$ where
$\size(s)$ is the size of the migrated server $s$, $e$ is a link on
the migration path from $u$ to $v$, $k$ is the number of PIPs
traversed and $\pi$ is the cost of migrating across a PIP boundary.
Observe that given a migration path, the cost is different in case
the migration occurs once along the entire path compared to the case
where it occurs in two steps at different times.

\section{Competitive Intra-Provider Migration}\label{sec:intra}

This section presents two online protocols for server migration
within a single provider (PIP). The main idea of both algorithms is
to divide time into \emph{epochs}: As we will see, also an optimal
offline algorithm will have certain costs in such an epoch, which
allows us to compare its performance to the online algorithms. While
algorithm \RAND\ uses randomization to identify good locations to
serve the current requests, algorithm \DET\ deterministically
migrates to the gravity centers of the demand. Before describing
these two algorithms in detail, we briefly discuss static strategies
without migration.

\subsection{The \STAT\ Algorithm}

In order to compare the benefits of migration to a static scenario,
we derive the competitive ratio of fixed strategies (see
also~\cite{visa10}).
\begin{lemma}\label{lemma:withoutmig}
A system without migration yields a competitive ratio of
$$\rho\in \Theta(\text{Diam}(G)),$$ where $\text{Diam}(G)$ is the
network diameter of substrate network $G$.~\cite{visa10}
\end{lemma}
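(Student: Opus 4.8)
The plan is to establish $\rho\in\Theta(\Diam(G))$ by proving a matching upper and lower bound separately. Throughout I treat the access cost of a request issued at node $a$ while the single server sits at node $p$ as the shortest-path latency $d(a,p)$, and I use that a static (fixed-location) strategy pays no migration cost, so its total cost on a sequence $\sigma$ is just the sum of these access distances.

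For the upper bound I would fix the server at a graph center $c$, i.e.\ a node minimizing the eccentricity $\max_{a} d(c,a)$; its eccentricity equals the radius of $G$, which is at most $\Diam(G)$. Hence every request served by $\STAT$ costs at most $\Diam(G)$, so on a sequence with $N$ requests in total we get $\STAT(\sigma)\le N\cdot\Diam(G)$. To convert this into a ratio I would lower-bound $\OPT(\sigma)$ by $\Omega(N)$: to serve each individual request the offline server is either already at the request's access node or at positive distance, so charging a constant minimal access cost per request — and ignoring, to $\OPT$'s advantage, its migration cost — yields $\OPT(\sigma)=\Omega(N)$. Dividing gives $\STAT(\sigma)/\OPT(\sigma)=O(\Diam(G))$.

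For the lower bound I would exhibit a single sequence forcing \emph{any} fixed-location strategy to be $\Omega(\Diam(G))$ worse than $\OPT$. Pick two nodes $u,v$ with $d(u,v)=\Diam(G)$ and issue requests in alternating blocks, all at $u$, then all at $v$, and so on, each block just long enough that migrating is cheaper for the offline algorithm than serving from afar. Then $\OPT$ migrates the server to the active endpoint at the start of each block, paying only $\Cost_{\text{mig}}(u,v)$ per block and near-zero access cost thereafter. Any fixed location $p$ satisfies $\max\{d(p,u),d(p,v)\}\ge \Diam(G)/2$ by the triangle inequality, so in the blocks addressed to the far endpoint every request costs at least $\Diam(G)/2$; summed over a block this dominates the single migration $\OPT$ pays, giving a ratio of $\Omega(\Diam(G))$.

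The main obstacle is the interplay between the block length and the migration cost in both directions. In the lower bound the blocks must be long enough that $\OPT$ genuinely prefers migration, yet the ratio must be pinned at $\Theta(\Diam(G))$ rather than allowed to grow without bound; this is precisely where the model's normalization — tying the migration cost to the scale of the network, as in the $n=\Theta(\beta)$ regime noted in the introduction — is required, and I would calibrate the block length as $\Theta\!\left(\Cost_{\text{mig}}(u,v)/\Diam(G)\right)$ so that the two costs balance. The same calibration underlies the per-request lower bound $\OPT(\sigma)=\Omega(N)$ invoked for the upper bound, so confirming that $\OPT$ cannot undercut this bound through clever migration is the step that demands the most care.
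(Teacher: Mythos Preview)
The paper does not actually prove this lemma; it cites the workshop version~\cite{visa10} and only remarks afterwards that the best static strategy \STAT\ places the server at the graph center. So there is no in-paper argument to compare against, and your two-sided plan (upper bound via center placement, lower bound via alternating blocks at diametral endpoints) is the standard one and almost certainly what the cited reference does.

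Two steps in your write-up are not correct as stated, however. First, the justification for $\OPT(\sigma)=\Omega(N)$: your dichotomy ``either already at the request's access node or at positive distance'' does not give a per-request lower bound, because if every request arrives at a single node $v$ and $\OPT$ sits at $v$ throughout, its graph-distance access cost is zero and the ratio against any other fixed location is unbounded. What actually makes $\OPT(\sigma)\ge \Omega(N)$ hold in this model is the first wireless hop from terminal to access point, which the paper explicitly folds into $\Cost_{\text{acc}}$ (see the offline section). You should invoke that model feature, not a distance argument.

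Second, your block-length calibration for the lower bound is backwards. With the per-request floor of~$1$, a block pair of length $2L$ costs $\OPT$ roughly $2L+\Cost_{\text{mig}}(u,v)$, while any fixed position pays at least $L\cdot\Diam(G)$; the ratio is therefore $\Theta\!\bigl(L\,\Diam(G)/(2L+\Cost_{\text{mig}})\bigr)$. Your choice $L=\Theta(\Cost_{\text{mig}}/\Diam(G))$ makes numerator and denominator comparable and drives the ratio to $\Theta(1)$, not $\Theta(\Diam(G))$. The right move is the opposite: take $L$ large (any $L\gg \Cost_{\text{mig}}$ suffices), so that migration is negligible against the $2L$ floor and the ratio tends to $\Diam(G)/2$. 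The $n=\Theta(\beta)$ normalization you cite from the related-work discussion is a red herring here; that regime concerns lower bounds for \emph{migrating} online algorithms and plays no role in the static analysis.
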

In a fixed scenario, the best static algorithm \STAT\ hosts $s$ is
in the network center, i.e., the location which minimizes the
worst-case distance traveled by the requests, namely at node $u$ for
which $ u:=\arg\min_{v\in V} \max_{w\in V}\Cost_{\text{acc}}(v,w).$

Note, since there is no migration in the fixed scenario, the
competitive ratio does not depend on any bandwidth constraints
(i.e., on link weights). This means that in networks with highly
heterogeneous links or with links whose capacity changes quickly
over time, a static solution without migration may be good.

\subsection{The \RAND\ Algorithm}

We now describe an online migration algorithm $\RAND$ (see
also~\cite{visa10}). The basic idea of $\RAND$ is to strike a
balance between the request latency cost
$\Cost_{\text{acc}}^{\RAND}$ and the migration cost
$\Cost_{\text{mig}}^{\RAND}$ it incurs, and to continuously move
closer to a possible optimal position. The intuition is that after a
small number of migrations only, either $\RAND$ is at the optimal
position, or an optimal offline algorithm $\OPT$ must have migrated
as well during this time period. Either way, $\OPT$ cannot incur
much smaller costs than $\RAND$. In other words, by using $\RAND$
for moving to good locations in the network, a possible offline
algorithm that migrates less frequently cannot have much lower
access costs than $\RAND$; on the other hand, an offline strategy
with frequent migrations will have similar costs to
$\Cost_{\text{mig}}$.

Let us first consider a scenario with constant bandwidth capacities,
i.e., $\BW(e)=\BW~~\forall e\in E$ and let $\beta=\size(s)/\BW$ be
the corresponding migration cost.
\begin{shaded}
The algorithm $\RAND$ divides time into \emph{epochs}. In each epoch
$\RAND$ monitors, for each node $v$, the cost of serving all
requests from this epoch by a server kept at $v$.  We denote this
counter by $\counter(v)$. $\RAND$ keeps the server at a single node
$w$ till $\counter(w)$ reaches $\beta$.  In this case, $\RAND$
migrates the server to a node $u$ chosen uniformly at random among
nodes with the property $\counter(u) < \beta$. If there is no such
node, $\RAND$ does not migrate the server, and the epoch ends in
that round; the next epoch starts in the next round and the counters
$\counter(v)$ are reset to zero.
\end{shaded}

\begin{lemma}\label{lem:randsingle}
$\RAND$ is $O(\log n)$-competitive in networks with constant
bandwidth.~\cite{visa10}
\end{lemma}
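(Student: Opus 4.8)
The plan is an epoch-based charging argument that compares $\RAND$'s expected cost with $\OPT$ epoch by epoch; essentially all the difficulty lies in bounding the expected number of migrations per epoch by $O(\log n)$. I first establish the per-epoch lower bound on $\OPT$. By construction an epoch ends exactly when every counter has reached $\beta$, so at the end of a complete epoch $\counter(v)\ge\beta$ for all $v\in V$. Fix such an epoch: if $\OPT$ keeps its server at one node $v$ throughout, its access cost is exactly $\counter(v)\ge\beta$; if $\OPT$ migrates at least once during the epoch, it pays migration cost at least $\beta$. Hence $\OPT$ pays at least $\beta$ in every complete epoch, so $\OPT(\sigma)\ge(E-1)\,\beta$, where $E$ is the number of epochs (the last possibly incomplete).

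Next I bound $\RAND$'s cost inside one epoch. While the server sits at a node $u$, the per-round access cost $\RAND$ pays equals the increment of $\counter(u)$, and the server leaves $u$ as soon as $\counter(u)$ reaches $\beta$; thus the access cost accumulated at $u$ is at most $\beta$, plus at most one triggering round (an additive $O(\beta)$ under the mild assumption that a single round's cost is $O(\beta)$). Each migration costs exactly $\beta$. So if the server visits $m+1$ nodes (migrating $m$ times) in the epoch, $\RAND$'s cost there is $O((m+1)\,\beta)$, and it remains to show $\mathbb{E}[m]=O(\log n)$.

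The crux is that $\counter(v)$ is a hypothetical quantity depending only on the request sequence $\sigma$, never on where $\RAND$'s server actually sits; consequently the order in which the counters reach $\beta$ is fixed by $\sigma$ alone and is independent of $\RAND$'s coin flips (this is precisely what makes the analysis valid against an oblivious adversary). Rank the nodes $v_{(1)},\dots,v_{(n)}$ by the time their counters hit $\beta$. When the server sits at the node of rank $r$, it migrates exactly when that counter reaches $\beta$, at which instant the eligible nodes (those with $\counter<\beta$) are precisely $\{v_{(r+1)},\dots,v_{(n)}\}$, and $\RAND$ jumps to a uniformly random one of them. The visited ranks therefore increase strictly, and from a node with $j$ future-eligible successors the next jump lands uniformly on one leaving $j'$ successors, with $j'$ uniform in $\{0,\dots,j-1\}$. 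Letting $g(j)$ be the expected number of remaining migrations gives
$$
g(j)=1+\frac{1}{j}\sum_{i=0}^{j-1} g(i),\qquad g(0)=0,
$$
whose solution is the harmonic number $g(j)=H_j=\Theta(\log j)$. Hence $\mathbb{E}[m]\le H_n=O(\log n)$, so $\RAND$'s expected cost per epoch is $O(\beta\log n)$.

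Combining the bounds, $\mathbb{E}[\RAND(\sigma)]=O(\beta\log n)\cdot E$ while $\OPT(\sigma)\ge(E-1)\,\beta$, giving competitive ratio $O(\log n)$; the single possibly-incomplete final epoch adds only an additive $O(\beta\log n)$, absorbed once at least one complete epoch exists, and the degenerate case of an input fitting in one incomplete epoch is handled directly since then both algorithms pay $O(\beta)$. The step I expect to be the main obstacle is the independence claim behind the rank process: one must check carefully that the counters, and thus the reach-order, are genuinely decoupled from the server trajectory, so that the uniform choice among eligible nodes really produces the memoryless uniform-jump dynamics underlying the harmonic recurrence.
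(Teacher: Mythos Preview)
Your proof is correct and follows essentially the same epoch-based charging argument as the paper: lower-bound $\OPT$ by $\beta$ per epoch, upper-bound $\RAND$'s access cost by $\beta$ per phase, and bound the expected number of migrations per epoch by $H_n$. The only difference is that the paper simply cites~\cite{visa10} for the $H_n$ bound, whereas you supply the standard argument explicitly (counters are oblivious to the server's trajectory, so the reach-order is fixed and the uniform jump among surviving nodes yields the harmonic recurrence); your version is thus more self-contained but not a different route.
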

\begin{proof}
Fix any epoch $\Epoch$ and let $\beta$ denote the migration cost. If
$\OPT$ migrates the server within $\Epoch$, it pays $\beta$.
Otherwise it keeps it at a single node paying the value of the
corresponding counter at the end of $\Epoch$.  By the construction
of $\RAND$, this value is at least $\beta$, and thus in either case
$\OPT(\Epoch) \geq \beta$.

The migrations performed by $\RAND$ partition $\Epoch$ into several
\emph{phases}. According to our migration strategy, the access cost
of $\RAND$ in each phase is at most $\beta$. In~\cite{visa10}, we
show that the expected number of migrations within one epoch is at
most $H_n$, where $H_n$ is the $n$-th harmonic number. The number of
phases is then $H_n+1$, and hence $\RAND(\Epoch) \leq \beta \cdot
H_n + \beta \cdot (H_n+1) = \beta \cdot O(\log n)$. This yields the
competitiveness of $\RAND$.
\end{proof}
Note that the analysis does not rely on access costs being measured
as the number of hops. Rather, the analysis (and hence also the
result) is applicable to any metric which ensures that counters
increase monotonically over time, i.e., with additional requests.

For networks with general bandwidths, $\RAND$ can be adopted in such
a way that it migrates when the counter of the current location $v$
reaches $\size(s)/\min_{e}\BW(e)$, that is, when $\counter(v)\geq
\size(s)/\min_{e}\BW(e)$. Thus, the cost of the optimal algorithm in
each epoch is at least $\size(s)/\max_{e}\BW(e)$, while the cost of
$\RAND$ is at most $\size(s)/\min_{e}\BW(e)$. Therefore, by the same
arguments as in the proof of Lemma~\ref{lem:randsingle}, we
immediately obtain the following result.
\begin{theorem}
$\RAND$ is $O(\mu\cdot \log n)$-competitive in general networks,
where $\mu=\max_{e,e'\in E} \BW(e)/\BW(e')$.
\end{theorem}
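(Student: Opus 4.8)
The plan is to adapt the proof of Lemma~\ref{lem:randsingle} to networks with heterogeneous bandwidths by carefully tracking how the migration threshold and the bounds on access and optimal cost change. The algorithm $\RAND$ is modified so that it migrates away from the current location $v$ once $\counter(v) \geq \size(s)/\min_e \BW(e)$, i.e., the threshold is set by the \emph{largest} possible migration cost. I would fix an arbitrary epoch $\Epoch$ and establish two bounds: a lower bound on $\OPT(\Epoch)$ and an upper bound on $\RAND(\Epoch)$, exactly mirroring the structure of the earlier proof, but now keeping the bandwidth variation explicit.

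First I would lower-bound the cost of the optimal offline algorithm within the epoch. As before, either $\OPT$ migrates during $\Epoch$, in which case it pays at least the cheapest possible migration cost, namely $\size(s)/\max_e \BW(e)$ (migration over the highest-bandwidth edges is cheapest), or it keeps the server fixed at some node and pays that node's counter value at the end of the epoch. By the modified stopping rule, every counter has reached at least $\size(s)/\min_e \BW(e) \geq \size(s)/\max_e \BW(e)$ before the epoch ends. Hence in either case $\OPT(\Epoch) \geq \size(s)/\max_e \BW(e)$.

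Next I would upper-bound the cost of $\RAND$ in the epoch. The migrations partition $\Epoch$ into phases; within each phase the access cost is bounded by the threshold $\size(s)/\min_e \BW(e)$, and each migration costs at most $\size(s)/\min_e \BW(e)$ as well (migrating over the worst edges). The key combinatorial fact—that the expected number of migrations per epoch is at most $H_n = O(\log n)$—carries over unchanged, since it depends only on the uniform-random choice among the $\counter(u) < \beta$ nodes and not on the particular value of the threshold. Thus $\RAND(\Epoch) \leq O(\log n) \cdot \size(s)/\min_e \BW(e)$. Dividing the two bounds, the ratio becomes $O(\log n) \cdot (\max_e \BW(e)/\min_e \BW(e)) = O(\mu \log n)$, and summing over all epochs (and adding the usual $O(1)$ slack for the final, possibly incomplete epoch) yields the claimed competitive ratio.

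The main thing to get right is the bookkeeping of which bandwidth enters which bound: the threshold and $\RAND$'s per-phase costs must use $\min_e \BW(e)$ while $\OPT$'s lower bound must use $\max_e \BW(e)$, and the factor $\mu$ emerges precisely from this mismatch. Since the excerpt already justifies this adaptation in the paragraph preceding the theorem and invokes ``the same arguments as in the proof of Lemma~\ref{lem:randsingle},'' I expect no genuine obstacle; the proof is essentially a one-line consequence of substituting the new threshold into the established epoch-by-epoch accounting, with $\mu$ appearing as the ratio of the two bandwidth extremes.
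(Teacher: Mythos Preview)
Your proposal is correct and matches the paper's own argument essentially verbatim: the paper simply sets the migration threshold to $\size(s)/\min_e \BW(e)$, observes that $\OPT(\Epoch)\geq \size(s)/\max_e \BW(e)$ while $\RAND(\Epoch)\leq O(\log n)\cdot \size(s)/\min_e \BW(e)$ by the identical epoch/phase accounting of Lemma~\ref{lem:randsingle}, and takes the ratio. There is no additional idea in the paper beyond what you have written.
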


\subsection{The \DET\ Algorithm}

While for the analysis of \RAND, we assumed an oblivious adversary
which cannot be adaptive with respect to the random choices made by
the online algorithm, we now focus on deterministic algorithms \DET.
As we will see, a logarithmic competitive ratio can also be
achieved. Again, we will first assume $\BW(e)=\BW~~\forall e\in E$
and $\beta=\size(s)/\BW$.

\begin{shaded}
\DET\ divides time into \emph{epochs} consisting of one or multiple
\emph{phases} between which \DET\ migrates. Again, we have counters
$\counter(v)$ for each node $v$ that are set to zero at the
beginning of an epoch. These counters accumulate the access costs of
an epoch if the server was permanently located at $v$. Henceforth,
we will call all nodes $v$ for which at time $t$,
$\counter(v)<\beta/40$, \emph{active nodes} at time $t$. Assume that
algorithm \DET\ is currently at some node $v$. \DET\ remains at this
node until it accumulated there access costs of $\beta$. Then, a new
phase starts, and \DET\ computes the \emph{gravity center} $w$,
i.e., the ``center'' of the currently active nodes. Formally, let
$d$ denote the shortest path metric (w.r.t.~access costs) on the
network $G$. The gravity center of a subset $V' \subseteq V$ of
nodes is defined as the (not necessarily unique) node $\G(V') = \arg
\min_{v \in V'} \sum_{u \in A} d(u,v)$, where $A$ is the set of
access points. (Ties are broken arbitrarily.) \DET\ migrates to $w$
and a new phase starts. If there is no active node left, the epoch
ends.
\end{shaded}

In order to study the competitive ratio of \DET, we exploit the fact
that a request always increases the counter of several nodes besides
the gravity center (namely: a constant fraction) by at least a
certain value (again, a constant fraction) as well.
\begin{lemma}\label{lem:costatothers}
Let $\lambda_1 = 1/5$ and $\lambda_2 = 1/4$. Fix any active set
$V'$. Let $r$ be an arbitrary requesting node (at some step). Assume
the counter at the gravity center $\G(V')$ increased by $F$ because
of this request. Then there are at least $\lambda_2 \cdot |V'|$
nodes from $V'$ whose counters increased at least by~$\lambda_1
\cdot F$.
\end{lemma}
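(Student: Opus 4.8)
The plan is to argue by contradiction, using the optimality of the gravity center. Write $g := \G(V')$ and $D := d(r,g) = F$; if $D=0$ the claim is trivial, so assume $D>0$. Since the counter of a node $v$ increases by exactly the access cost $d(r,v)$ of serving $r$ from $v$, the nodes whose counters increase by at least $\lambda_1 F = \tfrac15 F$ are precisely those with $d(r,v)\ge \tfrac15 D$. Call a node $v\in V'$ \emph{near} if $d(r,v) < \tfrac15 D$, and let $B\subseteq V'$ collect the near nodes; it then suffices to show $|V'\setminus B|\ge \lambda_2|V'| = \tfrac14|V'|$. Suppose instead $|V'\setminus B| < \tfrac14|V'|$, i.e.\ $|B| > \tfrac34|V'|$, which is equivalent to the cardinality bound $|B| > 3\,|V'\setminus B|$ that will drive the contradiction.

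First I would record three triangle-inequality facts about $B$: any two near nodes satisfy $d(b,b') < \tfrac25 D$; every near node is far from $g$, namely $d(b,g) > \tfrac45 D$; and every near node satisfies $d(b,g) < \tfrac65 D$. The key step is then to exhibit a strictly better center than $g$, contradicting the definition of the gravity center as the minimizer of $\Phi(v) := \sum_{u\in V'} d(u,v)$ over $V'$. I pick an arbitrary $b^*\in B$ and estimate $\Phi(g)-\Phi(b^*) = \sum_{u\in V'}\bigl(d(u,g)-d(u,b^*)\bigr)$ term by term: for $u\in B$ the difference exceeds $\tfrac45 D - \tfrac25 D = \tfrac25 D$, while for $u\in V'\setminus B$ it is at least $-d(g,b^*) > -\tfrac65 D$. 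Summing gives $\Phi(g)-\Phi(b^*) > \tfrac25 D\,|B| - \tfrac65 D\,|V'\setminus B| = \tfrac25 D\,(|B|-3|V'\setminus B|) > 0$, so $b^*$ is a strictly better center than $g$, a contradiction.

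The heart of the argument, and the only place the specific constants enter, is the balancing in the last estimate: the per-node \emph{gain} of $(1-3\lambda_1)D$ on the large set $B$ must outweigh the worst-case per-node \emph{loss} of $(1+\lambda_1)D$ on the small set $V'\setminus B$. Since $|V'\setminus B| < \lambda_2|V'|$ forces $|B| > \tfrac{1-\lambda_2}{\lambda_2}|V'\setminus B|$, the contradiction goes through whenever $(1-3\lambda_1)\tfrac{1-\lambda_2}{\lambda_2} \ge 1+\lambda_1$, an inequality that holds \emph{with equality} for $\lambda_1=\tfrac15$, $\lambda_2=\tfrac14$ — which is why these values appear. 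The one subtlety I would flag is interpretational rather than technical: the proof uses the gravity center as the minimizer of the total distance to the active set $V'$ itself (``the center of the currently active nodes''), since it is exactly the membership $b^*\in B\subseteq V'$ that keeps $b^*$ close to the bulk of the summands. With the summation ranging over $V'$ the estimate closes, and the argument invokes nothing about the metric beyond the triangle inequality.
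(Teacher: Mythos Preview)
Your proof is correct and follows essentially the same approach as the paper's: assume for contradiction that more than $(1-\lambda_2)|V'|$ nodes lie within distance $\lambda_1 F$ of $r$, pick any such near node as an alternative center, and use triangle-inequality estimates together with the cardinality ratio $|B|>3\,|V'\setminus B|$ to show it strictly beats $\G(V')$. Your flagging of the interpretational subtlety is apt as well: the paper's own proof also sums over $V'$ (not over the access-point set $A$ appearing in its stated definition of the gravity center), so the argument indeed requires reading $\G(V')$ as the minimizer of $\sum_{u\in V'} d(u,v)$.
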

\begin{proof}
Assume the contrary. It means that there are at least $(1-\lambda_2)
\cdot |V'|$ nodes from $V'$ whose counter increase is smaller than
$\lambda_1 \cdot F$. Denote this set by $V''$. We know that the
distance between the request and the center is $d(\G(V'),r) = F$,
and $\forall u\in V''$, $d(u,r)<\lambda_1 \cdot F$. Therefore,
$\forall u,v \in V''$, $d(u,v) < 2 \lambda_1 \cdot F$: the diameter
of the set $V''$ is relatively small.

Now let $\xi$ be any node of $V''$. We show that $\xi$ would be a
better candidate for the gravity center than $\G(V')$ is. Using
triangle inequalities, we obtain
\begin{footnotesize}
\begin{align*}
    \sum_{u \in V'} d(\G(V'),u) \;
        = &\; \sum_{u \in V''} d(\G(V'),u) + \sum_{u \in V' \setminus V''} d(\G(V'),u) \\
        \geq &\; \sum_{u \in V''} \left[ d(\G(V'),r) - d(u,r) \right]\\
         & + \sum_{u \in {V' \setminus V''}} d(\G(V'),u) \\
        > &\; (1-\lambda_1) \cdot |V''| \cdot F + \sum_{u \in {V' \setminus V''}}
        d(\G(V'),u)\\
        >&\;\frac{4}{5}\cdot |V''| \cdot F + \sum_{u \in V' \setminus V''} d(\G(V'),u)
            \enspace
\end{align*}
\end{footnotesize}
\noindent because $d(\G(V'),r)= F$ and $d(u,r)\leq \lambda_1 \cdot
F$, and by substituting $\lambda_1=1/5$. On the other hand, note
that $|V' \setminus V''|\leq |V'|/4\leq |V''|/3$ and
\begin{footnotesize}
\begin{align*}
    \sum_{u \in V'} d(\xi,u) \;
        = &\; \sum_{u \in V''} d(\xi,u) + \sum_{u \in V' \setminus V''} d(\xi,u) \\
        < &\; 2 \lambda_1 \cdot |V''| \cdot F\\ &+ \sum_{u \in V' \setminus V''} \left[ d(\xi,r) + d(r,\G(V')) + d(\G(V'),u) \right] \\
        < &\; 2 \lambda_1 \cdot |V''| \cdot F + |V' \setminus V''| \cdot (1+\lambda_1) \cdot
        F\\
        &+ \sum_{u \in V' \setminus V''} d(\G(V'),u)\\
        \leq &\; \frac{4}{5}\cdot |V''|\cdot F+ \sum_{u \in V' \setminus V''} d(\G(V'),u)
    \enspace
\end{align*}
\end{footnotesize}
\noindent because $d(\xi,r)<\lambda_1 \cdot F$, $d(r,\G(V'))= F$,
and by substituting the value of $\lambda_1=1/5$. This contradicts
that $\G(V')$ is the gravity center of $V'$.
\end{proof}

From Lemma~\ref{lem:costatothers} it follows that when the counter
at the gravity center exceeds a given threshold, the counter of many
nodes besides the center must be high as well.
\begin{lemma}\label{lemma:activeset}
Fix any threshold $\tau$. When the counter at the gravity center
$\G(V')$ exceeds $\tau$, then there exists $V'' \subseteq V'$,
$|V''| \geq \frac{1}{8} \cdot |V'|$, such that for all $v \in V''$,
the counter at $v$ is at least $\tau/40$.
\end{lemma}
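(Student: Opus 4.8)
The plan is to lift the per-request guarantee of Lemma~\ref{lem:costatothers} to a statement about the \emph{accumulated} counters by means of a weighted double-counting argument. Index the requests seen during the relevant period by $j$, and let $F_j = d(\G(V'),r_j)$ be the increment that request $j$ adds to the center's counter, so that $W := \counter(\G(V')) = \sum_j F_j > \tau$. I will exhibit a subset $V''\subseteq V'$ with $|V''|\geq |V'|/8$ on which every counter is at least $\tau/40$.

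First I would apply Lemma~\ref{lem:costatothers} separately to each request $j$: it produces a set $S_j\subseteq V'$ with $|S_j|\geq \lambda_2|V'| = |V'|/4$, every node of which has its counter raised by at least $\lambda_1 F_j = F_j/5$ on account of request $j$. For $v\in V'$ put $W_v := \sum_{j:\,v\in S_j} F_j$. Since counters only accumulate, this immediately yields the pointwise bound $\counter(v)\geq \tfrac{1}{5}W_v$, so it suffices to find $|V'|/8$ nodes with $W_v \geq W/8$.

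The crux is that the ``good'' sets $S_j$ may differ arbitrarily from request to request, so one cannot follow a fixed quarter of the nodes; weighting each request by its center-increment $F_j$ and summing over nodes circumvents this. Exchanging the order of summation,
\begin{footnotesize}
\[
\sum_{v\in V'} W_v \;=\; \sum_j F_j\,|S_j| \;\geq\; \frac{|V'|}{4}\sum_j F_j \;=\; \frac{|V'|}{4}\,W .
\]
\end{footnotesize}
\noindent Because $W_v\leq W$ for every $v$, a Markov-type averaging finishes the argument: if $b$ nodes have $W_v < W/8$, then $\frac{|V'|}{4}W \leq \sum_{v} W_v < \frac{b}{8}W + (|V'|-b)\,W$, which forces $b < \frac{6}{7}|V'|$. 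Hence more than $\frac{1}{7}|V'| > \frac{1}{8}|V'|$ nodes satisfy $W_v\geq W/8$, and for each of them $\counter(v)\geq \tfrac{1}{5}W_v\geq W/40 > \tau/40$. These nodes form the desired $V''$.

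I expect the only real obstacle to be conceptual rather than computational: recognizing that the disjointness of the per-request good sets rules out a naive ``same fraction every time'' argument, and that weighting by $F_j$ before double-counting is exactly what converts Lemma~\ref{lem:costatothers} into the cumulative claim. Once that is set up, matching the constants (so that $\lambda_1=1/5$ and $\lambda_2=1/4$ yield precisely the $1/8$ and $1/40$ stated here) is a routine matter of fixing the averaging threshold at $W/8$.
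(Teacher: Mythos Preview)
Your argument is correct and follows essentially the same route as the paper: apply Lemma~\ref{lem:costatothers} to each request, double-count the weighted contributions $\sum_v W_v \geq \tfrac{1}{4}|V'|W$, and extract a large subset with big counters by averaging. The paper phrases the final step as a contradiction (assuming a $\tfrac{7}{8}|V'|$-sized ``bad'' set $V''$ and observing that each $S_j$ still meets $V''$ in at least $|V'|/8$ nodes, forcing $\sum_{v\in V''}\counter(v)\geq |V'|\tau/40$), whereas you go directly via a Markov inequality on the $W_v$'s---but the underlying double-counting is identical.
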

\begin{proof}
Assume the contrary. This means that there exists $V'' \subseteq
V'$, $|V''| \geq \frac{7}{8} \cdot |V'|$, such that for all $v \in
V''$, the counter at $v$ is smaller than $\tau/40$. Hence $\sum_{v
\in V''} \counter(v) < |V''| \cdot \tau / 40 \leq |V'| \cdot \tau /
40$. On the other hand, by Lemma~\ref{lem:costatothers}, each time
the counter $\counter(\G(V'))$ increases by $F$, at least $1/4 \cdot
|V'|$ counters from set $V'$ (and hence at least $1/8 \cdot |V'|$
counters from set $V''$, since $|V'\setminus V''| \leq
\frac{|V'|}{8}$) increase by $F/5$. Hence, in this case, the sum of
counters from $V''$ increases at least by $1/40 \cdot |V'| \cdot F$.
Therefore, when $\counter(\G(V')) \geq \tau$, $\sum_{v \in V''}
\counter(v) \geq |V'| \cdot \tau / 40$, which is a contradiction.
\end{proof}

For the competitive ratio, we therefore have the following result.
\begin{theorem}\label{thm:detsingle}
\DET\ is $O(\log{n})$-competitive.
\end{theorem}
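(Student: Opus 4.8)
The plan is to fix an arbitrary epoch $\Epoch$, bound $\OPT$'s cost from below and $\DET$'s cost from above by quantities that differ only by an $O(\log n)$ factor, and then sum over all epochs. Since both bounds are per-epoch and the epochs partition the request sequence, a per-epoch ratio of $O(\log n)$ immediately yields the claimed competitiveness.

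For the lower bound I would exploit the epoch-termination condition. An epoch ends only when no active node remains, i.e.\ when $\counter(v) \geq \beta/40$ for every node $v$. Now consider $\OPT$ restricted to $\Epoch$: if it migrates its server at least once during $\Epoch$, it pays at least $\beta$; otherwise it keeps the server at one fixed node $v$, and its access cost within $\Epoch$ equals $\counter(v) \geq \beta/40$. Either way $\OPT(\Epoch) = \Omega(\beta)$. (Charging each of $\OPT$'s migrations to a single epoch avoids double counting across epoch boundaries.)

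For the upper bound I would bound both the per-phase cost and the number of phases. Within a phase, $\DET$ sits at the gravity center $w = \G(V')$ of the current active set $V'$ and leaves only after accumulating access cost $\beta$ there, then migrates once at cost $\beta$; hence each phase costs $\DET$ at most $2\beta = O(\beta)$. To bound the number of phases I would show the active set shrinks geometrically. During a phase the counter at $w = \G(V')$ increases by $\beta$, so Lemma~\ref{lemma:activeset} (with $\tau = \beta$) yields a subset $V'' \subseteq V'$ with $|V''| \geq |V'|/8$ whose counters each increase by at least $\beta/40$ over the phase. Since counters are non-negative and monotonically nondecreasing, every node in $V''$ then has counter $\geq \beta/40$ and is therefore inactive from that point on. Thus the active set passed to the next phase has size at most $(7/8)|V'|$, and starting from $|V'| \leq n$ the active set empties after $O(\log_{8/7} n) = O(\log n)$ phases. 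Combining, $\DET(\Epoch) = O(\beta \log n)$.

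Dividing the two per-epoch bounds gives $\DET(\Epoch)/\OPT(\Epoch) = O(\log n)$, and summing over epochs completes the argument. The main obstacle is the geometric-shrinkage step: I must ensure Lemma~\ref{lemma:activeset} is applied to the counter increments accrued during the single phase in which $w$ is exactly the gravity center $\G(V')$ of the active set $V'$, and that ``a constant fraction of active nodes turn inactive'' is read off correctly --- the monotonicity of the counters is what guarantees these nodes never re-enter the active set, so the shrinkage compounds across phases. A minor but necessary check is that the access cost charged in a phase is $\beta$ up to one request's contribution, which is negligible under the standing assumption that a single request is far cheaper than a migration.
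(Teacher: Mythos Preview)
Your proposal is correct and follows essentially the same route as the paper's proof: lower-bound $\OPT(\Epoch)$ by $\Omega(\beta)$ via the epoch-termination condition, upper-bound $\DET(\Epoch)$ by $O(\beta)$ per phase, and invoke Lemma~\ref{lemma:activeset} to show the active set shrinks by a constant factor each phase, yielding $O(\log n)$ phases. Your write-up is in fact more careful than the paper's own proof --- you explicitly justify applying Lemma~\ref{lemma:activeset} to the per-phase increments (which is what its proof via Lemma~\ref{lem:costatothers} actually establishes), note the monotonicity needed so that deactivated nodes never return, and flag the one-request overshoot; the paper simply asserts the $1/8$-shrinkage without these remarks.
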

\begin{proof}
First, we consider the cost of the optimal offline algorithm. If
\OPT\ migrates in an epoch, it has costs $\beta$. Otherwise, due to
the definition of \DET, as there are no active nodes left at the end
of an epoch, the access costs of any node is also in the order of
$\Omega(\beta)$. Regarding \DET, we know that in each phase, access
costs are at most $\beta$, and it remains to study the number of
phases per epoch. By Lemma~\ref{lemma:activeset}, we know that in
each phase, the number of active nodes is reduced by a factor at
least $1/8$. Therefore, there are at most $O(\log{n})$ many phases
per epoch, and the claim follows.
\end{proof}

Note that we did not try to optimize the constants in this proof,
and in practice (and in our simulations), alternative thresholds can
be applied yielding better (but qualitatively equivalent) results.

Again, for networks with general bandwidths, $\DET$ can be adopted
in such a way that it migrates when the counter of the current
location $v$ reaches $\size(s)/\min_{e}\BW(e)$, that is, when
$\counter(v)\geq \size(s)/\min_{e}\BW(e)$. By the same arguments as
above, this adds a factor $\max_{e,e'\in E} \BW(e)/\BW(e')$ to the
competitive ratio.

\subsection{Remarks}

Recall that the adversarial model for \RAND\ and \DET\ is different,
and hence, one has to be careful when comparing the competitive
ratios: the bound for $\RAND$ only holds against oblivious
adversaries, and we expect the center of gravity approach to perform
better in worst-case scenarios with adaptive adversaries. Our
simulations show that the question which of the two strategies is
more efficient depends on the scenario.

Also note that both \RAND\ and \DET\ are quite general with respect
to the measure of access costs, i.e., the derived bounds hold for
arbitrary latency functions on the links in case of \RAND. This
allows us to generalize our analysis to scenarios where the access
latency, in addition to the sum of the link latencies, depends also
on the capacity of the hosting node: We simply need to take the
capacities into account when increasing the counters. In case of
\DET, the access costs must fulfill the triangle inequality.

\section{Inter-Provider Migration}\label{sec:inter}

The flexibility offered by network virtualization is not limited to
a single PIP. Rather, a Virtual Network Provider may have contracts
with multiple infrastructure providers, and provision a service
across PIP boundaries. In the following, we extend our model to
multiple provider scenarios. In particular, we assume that migrating
a server across a PIP boundary entails a fixed ``roaming'' cost
$\pi$ for each transit. Since we assume that a PIP typically does
not reveal its internal resource structure, we seek to come up with
migration algorithms that pose minimal requirements on the knowledge
of a PIP topology.

In order to study the benefits of migration, we again consider a
scenario without migration (algorithm \STAT). Of course,
Lemma~\ref{lemma:withoutmig} still applies: in a fixed scenario, the
best location for hosting $s$ is in the network center, i.e., the
location which minimizes the distance traveled by the requests.

In the following, we present how the randomized algorithm \RAND\ and
the deterministic algorithm \DET\ can be extended to multi-PIP
scenarios. We consider $k$ PIPs, migration inside a PIP costs
$\beta$, access costs are the number of hops, and migrating across
providers costs $\pi$ per crossed PIP boundary. We will concentrate
on the more realistic case where $\pi \geq \beta$. (If $\pi <
\beta$, our single PIP algorithms could be applied without taking
into account transit costs. This yields a performance in the order
as derived for the case $\pi \geq \beta$.)

It is sometimes useful to think of the \emph{PIP graph}, the graph
where all the nodes of one PIP form one vertex and two PIPs are
connected if there is a connection between nodes of the respective
PIPs in the substrate graph. In particular, we will refer to the
\emph{diameter of the PIP graph}, the largest number of PIPs to be
traversed on a shortest migration path, by $\Delta$.

Algorithm $\PRAND_k$ generalizes $\RAND$ by moving the server to one
of the PIPs having lower costs.
\begin{shaded}
The algorithm $\PRAND_k$ divides time into three types of
\emph{epochs}: \emph{huge epochs} which consist of one or several
\emph{large epochs} which in turn consist of $\lceil \pi/\beta
\rceil$ \emph{small epochs}. For each node $u$, we use two counters
$\counter(u)$ and \counterLarge($u$) to count the access cost during
a small and a large epoch, respectively. At the beginning of a small
epoch, all nodes are \emph{active}; similarly, at the beginning of a
huge epoch, we say that all PIPs are \emph{active}. During a small
epoch, the server is migrated within a single PIP only, until there
is no node $u$ left with access costs smaller than $\beta$:
$\PRAND_k$ monitors, for each node $u$, the cost of serving all
requests from this small epoch by a server kept at $u$; $\PRAND_k$
keeps the server at a single node $u$ till $\counter(u)$ reaches
$\beta$. When this happens, $\PRAND_k$ migrates the server to a node
$v$ chosen uniformly at random among nodes of the current PIP with
the property $\counter(v) < \beta$. If there is no such node,
$\PRAND_k$ does not migrate the server, and the small epoch ends in
that round; the next epoch starts in the next round and the counters
$\counter(u)$ are reset to zero.

After $\lceil \pi/\beta \rceil$ small epochs a large epoch ends.
Then $\PRAND_k$ determines the set of PIPs that contain at least one
node $v$ for which \counterLarge($v$)$<\pi$; all other PIPs become
inactive for the remainder of the current huge epoch. If there are
active PIPs left, $\PRAND_k$ chooses an active PIP uniformly at
random and migrates to an arbitrary node of that PIP; otherwise the
server stays where it is, and a new huge epoch begins.
\end{shaded}

We can derive the following competitive ratio on $\PRAND_k$'s
performance.
\begin{theorem}\label{thm:randmultiPIP}
$\PRAND_k$ is $O(\log{k}\cdot(\log{n_1}+\Delta))$-competitive in
networks with constant bandwidth and $k$ PIPs, where $n_1$ is the
size of the largest PIP, and $\Delta$ is the ``diameter of the PIP
graph''.
\end{theorem}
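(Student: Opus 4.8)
The plan is to mirror the two-level structure of the $\PRAND_k$ algorithm by a two-level counting argument, combining the single-PIP bound from Lemma~\ref{lem:randsingle} with an analogous bound across PIPs. Just as the epoch-based analysis of $\RAND$ charged $\OPT$ a cost of at least $\beta$ per epoch, here I would establish a lower bound on $\OPT$'s cost over each \emph{huge epoch}, and then separately upper-bound the cost of $\PRAND_k$ over the same interval, expressed in terms of $\beta$, $\pi$, $\log n_1$, $\log k$, and $\Delta$.

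First I would fix a huge epoch $\Epoch$ and lower-bound $\OPT(\Epoch)$. The argument is the same dichotomy as before, lifted one level up: either $\OPT$ migrates across a PIP boundary during $\Epoch$, in which case it pays at least $\pi$; or it stays within a single PIP for the whole huge epoch. In the latter case, since the huge epoch ended only when every PIP became inactive, every node $v$ (in particular every node of $\OPT$'s PIP) has $\text{\counterLarge}(v)\geq\pi$ at some point, so $\OPT$'s accumulated access cost is $\Omega(\pi)$. Hence $\OPT(\Epoch)=\Omega(\pi)$ in either case. This is the analogue of the ``$\OPT(\Epoch)\geq\beta$'' step.

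Next I would bound $\PRAND_k(\Epoch)$ from above. Each large epoch consists of $\lceil\pi/\beta\rceil$ small epochs, and by the proof of Lemma~\ref{lem:randsingle} each small epoch contributes expected cost $O(\beta\log n_1)$ (access plus intra-PIP migration, with $n_1$ bounding the PIP size). So one large epoch costs $O((\pi/\beta)\cdot\beta\log n_1)=O(\pi\log n_1)$ in access and intra-PIP migration, plus the single inter-PIP migration at its end, which costs at most $\Delta\cdot\pi$ since a migration path crosses at most $\Delta$ boundaries. The remaining ingredient is to bound the expected number of large epochs per huge epoch. Here I would reuse the harmonic-number argument from $\RAND$ at the PIP level: at each large-epoch boundary $\PRAND_k$ picks an active PIP uniformly at random, a PIP deactivates once some node's \counterLarge\ crosses $\pi$, and each random choice lands on an already-``good'' PIP with controlled probability, giving an expected $O(\log k)$ large epochs before all PIPs deactivate. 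Multiplying, $\PRAND_k(\Epoch)=O(\log k)\cdot O(\pi\log n_1+\Delta\pi)=O(\pi\log k\,(\log n_1+\Delta))$. Dividing by $\OPT(\Epoch)=\Omega(\pi)$ yields the claimed ratio.

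The main obstacle is the expected-$O(\log k)$ bound on the number of large epochs, specifically verifying that the uniform-random PIP selection behaves like the uniform-random node selection in $\RAND$. In $\RAND$ the clean harmonic bound relied on the fact that a node, once it crosses the threshold $\beta$, stays above it and is never revisited, so the random choice is genuinely over a shrinking set of ``still-active'' candidates whose counts only grow. I need to confirm the same monotonicity holds for \counterLarge\ across large epochs within a huge epoch: the large-epoch counters must accumulate (not reset) over the huge epoch so that a deactivated PIP never reactivates, and I must check that the probability of the random choice landing on a soon-to-deactivate PIP is high enough to force the $H_k$-style telescoping. I would also need to reconcile the analysis against the adversary model (oblivious, as for $\RAND$) and account carefully for the $\Delta\pi$ inter-PIP migration charge, since unlike the intra-PIP case a single migration is not bounded by the threshold but by the PIP-graph diameter.
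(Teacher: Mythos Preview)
Your proposal is correct and follows essentially the same approach as the paper's proof: both argue by huge epoch, bound $\PRAND_k$'s cost as $O(\log k)$ large epochs (via the harmonic argument lifted to the PIP level) times $O(\pi\log n_1)$ intra-PIP cost plus $O(\Delta\pi)$ transit per large epoch, and lower-bound $\OPT$ by $\Omega(\pi)$ per huge epoch via the same dichotomy (inter-PIP migration vs.\ staying in one PIP). The only cosmetic difference is in the $\OPT$ lower bound when it stays in one PIP: you argue via the \counterLarge\ threshold being exceeded at every node when the huge epoch ends, whereas the paper argues that in some large epoch $\OPT$ pays at least $\beta$ per small epoch, summing to $\Omega(\pi)$ over the $\lceil\pi/\beta\rceil$ small epochs; both routes yield the same bound.
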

\begin{proof}
From Lemma~\ref{lem:randsingle}, we know that during a small epoch,
$\PRAND_k$ accumulates a cost of at most $O(\beta \log{n_1})$: There
is at most a logarithmic number of migrations, and the access costs
per phase is at most $\beta$. Recall that a large epoch consists of
at most $\lceil \pi/\beta \rceil$ many small epochs, and
subsequently, a remaining active PIP is chosen uniformly at random.
Thus, similarly to Lemma~\ref{lem:randsingle}, it holds that there
are at most $O(\log k)$ many large epochs, yielding a total access
cost of $O(\log{k} \cdot \log{n_1} \cdot \pi/\beta \cdot
\beta)=O(\pi \log k  \cdot \log n)$. The migration costs within PIPs
are of the same order. The transit costs to move the server between
PIPs amounts to at most $O(\Delta \pi \log{k})$. Thus, the overall
cost of $\PRAND_k$ per huge epoch is in the order of
$O(\pi\log{k}\cdot(\log{n_1}+\Delta))$. On the other hand, an
optimal offline algorithm must have had costs of at least $\pi$ as
well during this huge epoch: if the optimal algorithm migrates
between PIPs, the claim follows trivially. Otherwise, the optimal
offline algorithm is located at a single PIP during the entire huge
epoch; by the construction of $\PRAND_k$, there must exist a large
epoch in which the optimal offline algorithm incurred a cost of at
least $\Omega(\pi)$: per small epoch the (access or migration) costs
are at least $\beta$, and there are $\lceil \pi/\beta \rceil$ many
small epochs in a large epoch. The claim follows.
\end{proof}
Note that the proof of Theorem~\ref{thm:randmultiPIP} is overly
pessimistic, as it assumes several large migration distances that
the optimal offline algorithm can avoid. We believe that $\PRAND_k$
performs better, also in the worst-case, a conjecture that is also
manifested by our experiments.

A similar extension also works for the deterministic variant.
\begin{shaded}
The algorithm $\PDET_k$ divides time into three types of
\emph{epochs}: a \emph{huge epoch} consists of multiple \emph{large
epochs}, and a large epoch consists of $40\lceil \pi / \beta\rceil$
\emph{small epochs}. Again, we use counters $\counter(u)$ to
accumulate the access costs of a node $u$ during a small epoch; in
addition, a counter \counterLarge($u$) is used to accumulate access
costs during a large epoch. In the beginning, all PIPs are set to
\emph{active}. At the beginning of a small epoch, the $\counter(u)$
values are set to zero for all nodes within the current PIP.
$\PDET_k$ then monitors, for each node $u$, the cost of serving all
requests from this small epoch by a server kept at $u$. $\PDET_k$
leaves the server at a single node $u$ till $\counter(u)$ reaches
$\beta$. In this case, $\PDET_k$ migrates the server to a node $v$
which constitutes the center of gravity among the \emph{active}
nodes of the current PIP, i.e., the nodes $w$ of the current PIP for
which it still holds that $\counter(w)<\beta/40$. If there is no
active node left within the current PIP, a  small epoch ends in that
round; the next small epoch starts in the next round. After
$40\lceil \pi / \beta\rceil$ small epochs, a new large epoch starts,
and all nodes $u$ in the network with \counterLarge($u$)$\geq
\pi/40$ become inactive with respect to the large epoch. Among all
remaining active nodes of the large epoch, $\PDET_k$ determines the
center of gravity of all nodes and moves the server to the
corresponding PIP, and a new large epoch begins. Otherwise, if there
is no PIP left that contains active nodes, the server stays where it
is, and a new huge epoch starts.
\end{shaded}

We can show the following result.
\begin{theorem}\label{thm:constant2}
$\PDET_k$ is $O(\log{n}(\log{n_1}+\Delta))$-competitive in networks
with constant bandwidth and $k$ PIPs, where $n_1$ is the size of the
largest PIP, and $\Delta$ is the diameter of the PIP graph.
\end{theorem}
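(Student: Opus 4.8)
The plan is to mirror the proof of Theorem~\ref{thm:randmultiPIP}, but to replace the randomized harmonic-number counting at \emph{both} the intra-PIP and the inter-PIP level by the deterministic gravity-center argument of Lemma~\ref{lemma:activeset}, in exactly the way Theorem~\ref{thm:detsingle} replaces Lemma~\ref{lem:randsingle} in the single-PIP setting. Concretely, I would fix one \emph{huge epoch}, bound the cost of $\PDET_k$ in it from above by $O(\pi\log n(\log n_1+\Delta))$ and the cost of $\OPT$ in it from below by $\Omega(\pi)$; dividing and summing over all huge epochs then yields the claimed ratio, with the (possibly incomplete) final huge epoch contributing only a lower-order additive term.

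For the upper bound I would work outward from the innermost structure. Inside a single small epoch, $\PDET_k$ behaves exactly like \DET\ confined to one PIP, so Lemma~\ref{lemma:activeset} with threshold $\tau=\beta$ shows the active set of that PIP shrinks by a factor $\tfrac{1}{8}$ per phase; hence a small epoch has $O(\log n_1)$ phases, each costing at most $\beta$ in access and $\beta$ in intra-PIP migration, i.e.\ $O(\beta\log n_1)$ in total. A large epoch is $40\lceil\pi/\beta\rceil$ small epochs plus one inter-PIP move of length at most $\Delta$, costing $O(\pi\log n_1)+O(\Delta\pi)=O(\pi(\log n_1+\Delta))$ since $\pi\ge\beta$. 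The crux of the upper bound is bounding the number of large epochs per huge epoch by $O(\log n)$, and here I would apply Lemma~\ref{lemma:activeset} \emph{one level up}. During a large epoch every node of the processed PIP accumulates large-epoch access cost at least $40\lceil\pi/\beta\rceil\cdot(\beta/40)\ge\pi$, because each of its small epochs ends only once that PIP is exhausted; in particular the node $\G(V')$ at which the server starts the large epoch, being by construction the gravity center of the current active set $V'$ and lying in the processed PIP, reaches large-epoch cost $\ge\pi$. Invoking Lemma~\ref{lemma:activeset} with the large-epoch counters, the access metric $d$, and threshold $\tau=\pi$, at least $\tfrac{1}{8}|V'|$ nodes of $V'$ reach $\pi/40$ and are deactivated, so the active set shrinks geometrically and only $O(\log n)$ large epochs fit in a huge epoch. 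Multiplying gives $O(\pi\log n(\log n_1+\Delta))$ for $\PDET_k$.

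For the lower bound I would show $\OPT\ge\Omega(\pi)$ per huge epoch by case analysis. If $\OPT$ crosses a PIP boundary inside the huge epoch it pays $\pi$ outright. Otherwise $\OPT$ stays in one PIP $P^*$ for the whole huge epoch. Since the huge epoch ends only once every node of $P^*$ has been deactivated, each $v\in P^*$ has large-epoch access cost at least $\pi/40$ in the large epoch in which it is deactivated. If $\OPT$ keeps its server fixed this already forces access cost $\ge\pi/40=\Omega(\pi)$; if $\OPT$ migrates within $P^*$ at least $\lceil\pi/\beta\rceil$ times it pays $\ge\lceil\pi/\beta\rceil\cdot\beta\ge\pi$ in intra-PIP migration. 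The remaining case — $\OPT$ moves within $P^*$ only a few times yet keeps its access cost below $\pi/40$ — is the one that must be ruled out using that $\PDET_k$ migrates to the gravity center of the active set: to keep $\OPT$'s access small the requests must concentrate near $\OPT$'s node in $P^*$, which draws the gravity center, and hence $\PDET_k$, into $P^*$; once $\PDET_k$ processes $P^*$, each of that large epoch's $40\lceil\pi/\beta\rceil$ small epochs exhausts $P^*$ and therefore charges $\OPT$ (now residing on the exhausted PIP) at least $\beta/40$ per small epoch, summing to $\Omega(\pi)$.

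I expect this last case to be the main obstacle. Reconciling $\OPT$'s freedom to migrate cheaply inside $P^*$ with the $\Omega(\pi)$ bound requires coupling the gravity-center-driven PIP choice of $\PDET_k$ to the region where $\OPT$ resides, rather than treating the intra- and inter-PIP levels independently as the randomized argument of Theorem~\ref{thm:randmultiPIP} does; making the ``the requests must concentrate near $\OPT$, hence $\PDET_k$ is drawn into $P^*$'' step quantitative is the delicate point. Everything else — the two nested applications of Lemma~\ref{lemma:activeset} and the bookkeeping of access, intra-PIP migration, and transit costs across the small/large/huge-epoch hierarchy — is routine once the accounting above is in place.
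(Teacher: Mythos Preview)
Your upper bound is exactly the paper's argument: invoke Theorem~\ref{thm:detsingle} inside a PIP to get $O(\beta\log n_1)$ per small epoch, multiply by $40\lceil\pi/\beta\rceil$ to get $O(\pi\log n_1)$ per large epoch, and then apply Lemma~\ref{lemma:activeset} at the network level (using that the gravity center of the global active set has \counterLarge${}\ge\pi$) to bound the number of large epochs by $O(\log n)$; adding the $O(\Delta\pi)$ transit cost per large epoch gives the stated total.

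Where you diverge from the paper is the lower bound, and here you are working much harder than the paper does. The paper's entire lower-bound argument is one sentence, lifted verbatim from the proof of Theorem~\ref{thm:randmultiPIP}: if $\OPT$ crosses a PIP boundary it pays $\pi$; otherwise it ``has either access or migration costs of at least $\beta/40$ per small epoch and hence $\Omega(\pi)$ per large epoch.'' There is no case split on how many times $\OPT$ migrates inside $P^*$, no appeal to the gravity center drawing $\PDET_k$ into $P^*$, and no attempt to make the ``requests concentrate near $\OPT$'' intuition quantitative. The paper simply asserts the per-small-epoch $\Omega(\beta)$ bound and moves on. So the ``main obstacle'' you identify --- ruling out an $\OPT$ that migrates a handful of times inside $P^*$ while keeping its access cost below $\pi/40$ --- is not something the paper resolves; it does not even surface it. Your careful case analysis is more rigorous than what the paper offers, but the coupling argument you propose for the last case is not needed to match the paper's proof, and the paper gives you no help in making it precise.
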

\begin{proof}
First we compute the total cost of $\PDET_k$ in a huge epoch. It
follows from Theorem~\ref{thm:detsingle} that a large epoch consists
of $40\lceil \pi/\beta \rceil$ small epochs of $O(\beta \log{n_1})$
access costs and at a logarithmic number of migrations amounting to
cost $O(\beta \log{n_1})$ as well, yielding a total cost per large
epoch of $O(\pi \log{n_1})$. Now observe that there is at most a
logarithmic number of large epochs per huge epoch: $\PDET_k$
guarantees that the server is not migrated to another PIP as long as
there is a node left in the current PIP with access costs smaller
than $\pi$; in particular, for the center of gravity $u$ of the
current large epoch PIP, \counterLarge($u$)$\geq \pi$, and hence,
again by Theorem~\ref{thm:detsingle}, a constant fraction of nodes
in the entire network must become inactive per large epoch. Summing
up over the large epochs and adding the transit cost of at most
$O(\Delta \pi\log{n})$, the total cost is at most
$O(\log{n}\cdot\pi(\log{n_1}+\Delta))$. The cost of the optimal
offline algorithm can be analyzed similar to the proof of
Theorem~\ref{thm:randmultiPIP}: If the offline algorithm migrates
during a huge epoch, it has a cost of at least $\pi$; otherwise, it
has either access or migration costs of at least $\beta/40$ per
small epoch and hence $\Omega(\pi)$ per large epoch, and the claim
follows.
\end{proof}
Again, we believe that the actual ratio is better, even in the
worst-case, as our analysis is pessimistic.

\subsection{Remarks}

Note that $\PRAND_k$ has the attractive property that it poses
minimal assumptions on the knowledge of the infrastructure topology
and allow for a large autonomy on the PIP level. Typically,
substrate providers are known for their secrecy on traffic matrices
as well as topology information. All information needed by
$\PRAND_k$ is the set of providers that could have served the
requests of a certain time period at lower cost, e.g., the set of
providers that could make ``a better offer''. $\PDET_k$ on the other
hand requires more knowledge of the topology. It assumes that
gravity centers can be computed across PIP boundaries, which is
unrealistic. However, while this facilitates the formal analysis, we
believe that pragmatic implementations that move the service, e.g.,
to the PIP which lies ``at the center'' of the active providers,
yield good approximations and justify the validity of concept and
analysis.

\section{Optimal Offline Algorithms}\label{sec:optoff}

In the competitive analysis of our online algorithms we often argued
about a hypothetical optimal offline algorithm to which we compare
our costs; there was no need to find or describe the offline
algorithm explicitly. However, while the decisions when and where to
migrate servers typically needs to be done \emph{online}, i.e.,
without the knowledge of future requests, there can be situations
where it is interesting to study which migration pattern would have
been optimal \emph{at hindsight}. For example, if it is known that
the requests follow a regular pattern (e.g., a periodic pattern per
day or week), it can make sense to compute an optimal migration
strategy offline and apply it in the future. Another reason for
designing optimal offline algorithms explicitly is that an optimal
solution is required to compute the competitive ratio in our
simulations.

This section is based on the ideas described in the VISA workshop
version of this paper~\cite{visa10}. We present an optimal offline
algorithm for our server migration problems. It turns out that
offline strategies can be computed for many different scenarios, and
we describe a very general algorithm here. Similarly to the online
algorithms, offline strategies can be computed efficiently both for
intra and inter provider migration.

It exploits the fact that migration exhibits an optimal substructure
property: Given that at time $t$, the server is located at a given
node $u$, then the most cost-efficient migration path that leads to
this configuration consists solely of optimal sub-paths. That is, if
a cost minimizing path to node $u$ at time $t$ leads over a node $v$
at time $t'<t$, then there cannot be a cheaper migration sub-path
that leads to $v$ at time $t'$ than the corresponding sub-path.

$\OPT$ essentially fills out a matrix
$opt[\text{time}][\text{node}]$ where $opt[t][v]$ contains the cost
of the minimal migration path that leads to a configuration where
the server satisfies the requests of time $t$ from node $v$.
Assume that initially, the service is located at node $v_0$. Thus,
initially, $opt[0][u]=Cost_{\text{mig}}(v_0,u) + \left[\sum_{v\in
\sigma_0}Cost_{\text{acc}}(v,u) \right]$ as the migration origin is
$v_0$, and as a request needs to travel on the access link from the
terminal to $v$ and from there to $u$ (w.l.o.g., we assume that the
cost $Cost_{\text{acc}}$ contains the first wireless hop from
terminal to substrate network).

For $t>0$, we find the optimal values $opt[t][u]$ by considering the
optimal migration paths to any node $v$ at time $t-1$, and adding
the migration cost from $v$ to $u$. That is, in order to find the
optimal cost to arrive at a configuration with server at node $u$ at
time $t$:
\begin{shaded}
$$
\min_{v\in V}\left[opt[t-1][v]+Cost_{\text{mig}}(v,u)+\sum_{w\in
\sigma_t}Cost_{\text{acc}}(w,u)\right]
$$
\end{shaded}
\noindent where we assume that $Cost_{\text{acc}}$ includes the
first (wireless) hop of the request from the terminal to the
substrate network, and where $Cost_{\text{mig}}(v,v)=0~~\forall v$.

We have the following runtime result.
\begin{theorem}
The optimal offline migration policy $\OPT$ can be computed in
$O(n^3 + n^2 \sum_{t\in \Gamma}|\sigma_t|)$ time, where $\Gamma$ is
the set of rounds in which events occur.
\end{theorem}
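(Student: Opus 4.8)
The plan is to analyze the dynamic-programming recurrence that defines the matrix $opt[t][u]$ and count the elementary operations needed to fill every entry, treating the precomputation of distances separately from the per-round update cost.

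First I would argue that the quantities $Cost_{\text{acc}}(w,u)$ and $Cost_{\text{mig}}(v,u)$ are shortest-path costs on the substrate network, so before the main loop begins we compute all pairwise distances. An all-pairs shortest-path computation (e.g.\ via Floyd--Warshall, or $n$ runs of a single-source algorithm) takes $O(n^3)$ time, which accounts for the additive $O(n^3)$ term in the stated bound. Since both the access metric and the migration metric are derived from these pairwise distances, this one-time preprocessing suffices for all subsequent lookups in constant time.

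Next I would bound the work of the recurrence itself. The matrix has one column per node and one row per event-round, i.e.\ $n \cdot |\Gamma|$ entries. For a fixed target node $u$ at round $t$, evaluating
$$
\min_{v\in V}\left[opt[t-1][v]+Cost_{\text{mig}}(v,u)+\sum_{w\in \sigma_t}Cost_{\text{acc}}(w,u)\right]
$$
requires scanning all $n$ candidate predecessors $v$, each lookup being $O(1)$ thanks to the precomputed distances; this gives $O(n)$ work per entry. The key efficiency observation is that the access term $\sum_{w\in \sigma_t}Cost_{\text{acc}}(w,u)$ does not depend on $v$, so it can be factored out of the inner minimization: for each round $t$ and each node $u$ we precompute this sum once in $O(|\sigma_t|)$ time, for a total of $O(n\,|\sigma_t|)$ per round and $O(n \sum_{t\in\Gamma}|\sigma_t|)$ overall. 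Summing the $O(n)$ minimization cost over all $n\cdot|\Gamma|$ matrix entries contributes $O(n^2 |\Gamma|)$, which is dominated by the access-summation term $O(n^2 \sum_{t\in\Gamma}|\sigma_t|)$ since $|\sigma_t|\geq 1$ on every event round. Combining the preprocessing with the fill gives the claimed $O(n^3 + n^2 \sum_{t\in\Gamma}|\sigma_t|)$ runtime; correctness of the values follows from the optimal-substructure property already argued before the statement.

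The main obstacle, and the point that deserves the most care, is the bookkeeping that justifies restricting attention to the event rounds $\Gamma$ rather than all time slots: in rounds where $\sigma_t$ is empty the configuration costs are unchanged, so $opt$ need only be recomputed at rounds where requests actually arrive, and the DP can skip idle intervals. I would make this explicit to ensure the summation genuinely ranges over $\Gamma$ and that the factor multiplying $n^2$ is $\sum_{t\in\Gamma}|\sigma_t|$ rather than the full time horizon. A secondary subtlety is verifying that the migration cost $Cost_{\text{mig}}(v,u)$, defined as $\max_e \size(s)/\BW(e) + k\cdot\pi$ along the migration path, is itself obtainable from a shortest-path-style precomputation in the inter-provider setting; here one must confirm that the bottleneck-bandwidth-plus-transit cost still admits an all-pairs computation in $O(n^3)$, which it does by an appropriate modification of the shortest-path relaxation.
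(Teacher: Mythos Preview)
Your proposal is correct and follows essentially the same route as the paper: precompute all-pairs distances in $O(n^3)$ via Floyd--Warshall, restrict the DP to the active rounds $\Gamma$, and count the work to fill the $|\Gamma|\times n$ table. The one minor refinement you add is factoring the access-cost sum $\sum_{w\in\sigma_t}\Cost_{\text{acc}}(w,u)$ out of the inner minimization over predecessors $v$; the paper's proof naively recomputes this sum for every $v$, but both analyses land on the same $O(n^3 + n^2\sum_{t\in\Gamma}|\sigma_t|)$ bound.
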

\begin{proof}
Note that we can constrain ourselves to optimal offline algorithms
where migration will only take place in ``active'' rounds $\Gamma$
with at least one request. This is useful in case of sparse
sequences with few requests. The $opt[\cdot][\cdot]$-matrix contains
$|\Gamma| \cdot n$ entries. In order to compute a matrix entry, we
need to consider each node $v\in A$ from which a migration can
originate; for each such node, the access cost from all the requests
in $\sigma_t$ need to be computed. Both the shortest access paths
and the migration costs can be looked up in a pre-computed table
(pre-computation in time at most $O(n^3)$, e.g., by
\emph{Floyd-Warshall's algorithm}) and require a constant number of
operations only, which implies the claim.
\end{proof}

Note that $\OPT$ is not an online algorithm. Although $opt[t]$ does
not depend on future requests, in order to reconstruct the optimal
migration strategy at hindsight, the configuration of minimal cost
after the last request is determined, and from there, the optimal
path is given by recursively finding the optimal configuration at
time $t-1$ which led to the optimal configuration at time $t$.

\section{Simulations}\label{sec:simulations}

In order to complement our formal insights and in order to study the
behavior of our algorithms in different environments, we implemented
a simulation framework. In the following, we report on some of our
results in more detail.

\subsection{Set-Up}

We conducted experiments on both artificial \emph{Erd\"os-R\'{e}nyi
graphs} random graphs (with connection probability 1\%) as well as
more realistic graphs taken from the \emph{Rocketfuel
project}~\cite{rocketfuel2,rocketfuel} (including the corresponding
latencies for the access cost).

If not stated otherwise, we assume that link bandwidths are chosen
at random (either T1 (1.544 Mbit/s) or T2 (6.312 Mbit/s)), that the
server size is 2048MB, that $\beta$ equals the server size divided
by the average bandwidth, and $\pi=3\beta$.

Note that our the runtime of the optimal offline algorithm and hence
the computation of the competitive ratio is expensive in large
networks; therefore, the scale of our experiments is typically
limited. However, as our online algorithms have a much lower runtime
than $\OPT$, experiments that do not rely on optimal offline results
can be conducted for much more nodes. Moreover, to gain insights
into the behavior of our algorithms in networks of this size, we use
a threshold $\tau=1/3$ (rather than $\tau=1/40$) to inactivate nodes
in $\DET$. This value is more practical and does not change the
qualitative results for large networks.

As our real traffic patterns are subject to confidentiality, we
consider two different simplified, artificial scenarios. Our
scenarios assume that the substrate topology does not reflect the
geographic situation or user pattern at all. This is conservative of
course, and online migration algorithms typically perform better if
requests move along the topology.

\begin{shaded}
\textbf{Time Zones Scenario:} This scenario models an access pattern
that can result from global daytime effects. We divide a day into
$T$ time periods. At each time $t$, $p\%$ of all requests originate
from a node chosen uniformly at random from the substrate network
(pessimistic assumption). The sojourn time of the requests at a
given location is distributed exponentially with parameter $\lambda$
as well.  In addition, there is background traffic: the remaining
requests originate from nodes chosen uniformly at random from all
access points.
\end{shaded}

We also studied an alternative scenario, capturing traffic from
commuters.
\begin{shaded}
\textbf{Commuter Scenario:} This scenario models an access pattern
that can result from commuters traveling downtown for work in the
morning and returning back to the suburbs in the evening. We use a
parameter $T$ to model the \emph{frequency} of the changes. At time
$t$ mod $T < T/2$, there are $2^{t \text{ mod } T}$  requests
originating from access points chosen uniformly at random around the
center of the network. In the second half of the day, i.e., for $t
\in [T/2,...,T]$, the pattern is reversed. Then a new day starts.
The commuter scenario can come in different flavors, e.g., where the
total number of requests remains constant over time, or where the
load is changing. We use the static load scenario in our
simulations. The total number of requests per round is fixed to
$2^{T/2}$. At time $t_i < T/2$, the requests originate from $p =
2^{t_i \text{ mod } T}$ of all access points including the network
center ($2^{T/2}/p$ requests per access point), until single
requests originate from $2^{T/2}$ access points. Then, the same
process is reversed until all $2^{T/2}$ requests originate from a
single access point: the network center. We assume that the time
period between $t_i$ and $t_{i+1}$ is distributed exponentially with
parameter $\lambda$.
\end{shaded}

\subsection{Intra Provider Migration}

A first set of experiments studies the competitive ratio as a
function of the number of nodes.
Figure~\ref{fig:single_pip_det_time_zone} reports on the impact of
different correlations of the requests in the time zone scenario.
First, we can observe that the competitive ratios are generally
quite low, and more or less independent of the network size. In
order to take into account that larger networks typically come with
a larger requests set, we assume that the number of requests per
round is one fifth of the network size. We can see that the
competitive ratios of $\DET$ are again quite low, but the optimal
offline algorithm can do relatively better if $p$ is large, which
meets our intuitions.
\begin{figure} [t]
\begin{center}
\includegraphics[width=0.85\columnwidth]{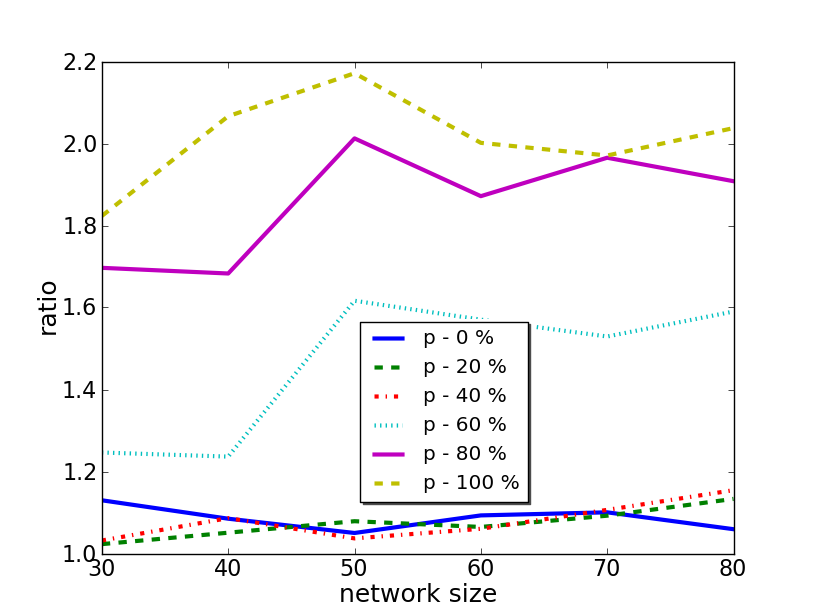}\\
\caption{Competitive ratio of $\DET$ as a function of network size
and $p$ in a time zone scenario. Results are averaged over 10 runs,
we use $\lambda=10$ and collected data over a period of 400
rounds.}\label{fig:single_pip_det_time_zone}
\end{center}
\end{figure}

The same results for $\RAND$ can be seen in
Figure~\ref{fig:single_pip_rand_time_zone}. Again, larger $p$ yield
higher ratios, and generally the performance is worse than the one
of $\DET$.
\begin{figure} [t]
\begin{center}
\includegraphics[width=0.85\columnwidth]{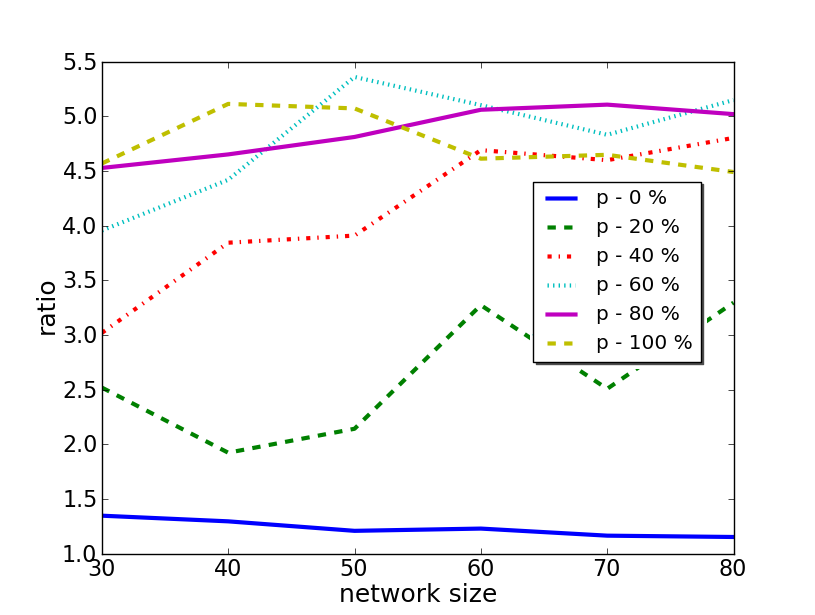}\\
\caption{Same experiment as
Figure~\ref{fig:single_pip_det_time_zone} for
$\RAND$.}\label{fig:single_pip_rand_time_zone}
\end{center}
\end{figure}


Due to the high time complexity of the optimal offline algorithm, we
conducted some experiments with absolute costs only, see
Figure~\ref{fig:single_pip_online_vs_static}. Again, the number of
requests per round is one fifth of the network size. This reveals an
interesting phenomena, namely that in this time zone scenario,
$\RAND$ becomes better and gets close to the performance of $\DET$
in large networks.
\begin{figure} [t]
\begin{center}
\includegraphics[width=0.85\columnwidth]{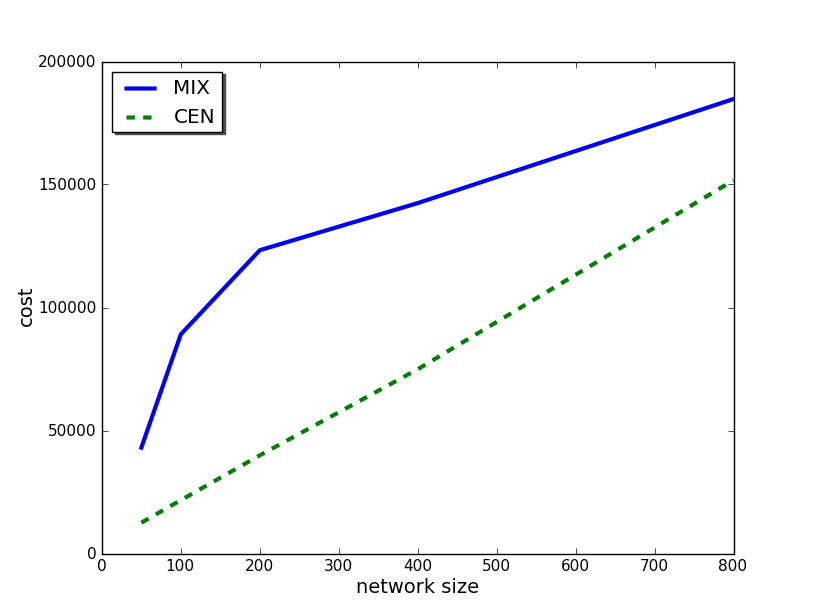}\\
\caption{Absolute costs in time zone scenario: $p=40\%$,
$\lambda=10$, averaged over 10 runs and studying a runtime of 500
rounds.}\label{fig:single_pip_online_vs_static}
\end{center}
\end{figure}

Another interesting question regards how the competitive ratio
depends on the dynamics, i.e., on $\lambda$.
Figure~\ref{fig:ratio_vs_lambda_single_pip} presents our results for
$\RAND$ and $\DET$. Although the variance is quite high (this is
typical, especially for $\RAND$), we can see a trend that in case of
very high dynamics and very low dynamics, the competitive ratio is
slightly lower than for $\lambda$ values in-between ($\lambda$ is
the mean stay duration). This can be explained by the fact that
$\OPT$ can optimize relatively more in scenarios where the online
migration decisions are not obvious.
\begin{figure} [t]
\begin{center}
\includegraphics[width=0.85\columnwidth]{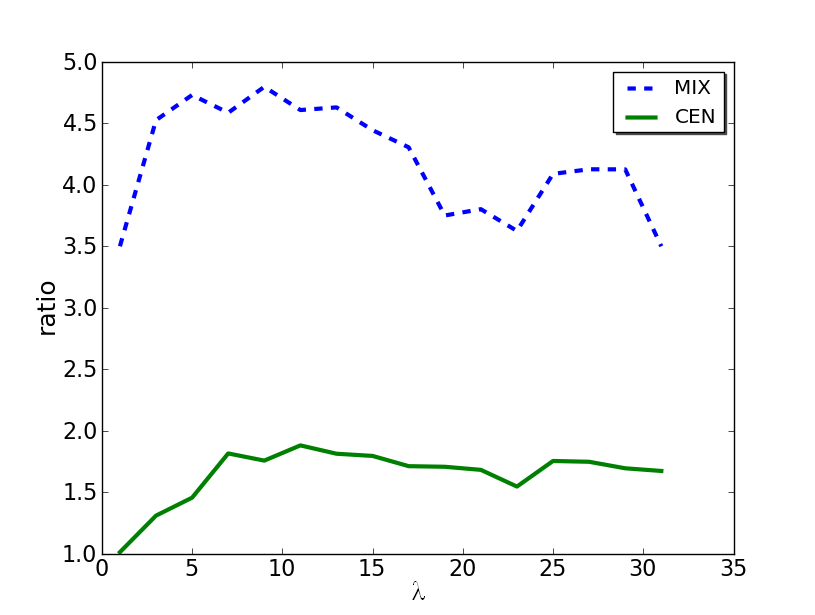}\\
\caption{Competitive ratio in time zone scenario (with $p=60\%$) as
a function of $\lambda$ and averaged over 10 runs and in a network
of 60 nodes. We ran the experiment for 200
rounds.}\label{fig:ratio_vs_lambda_single_pip}
\end{center}
\end{figure}

Generally, in the random graphs of the size used in our experiments,
the diameter is relatively low and hence $\STAT$ typically has a
good performance as well. In the commuter scenario, it is worse than
in the time zone scenario.
Figure~\ref{fig:single_pip_commuter_freq_3} shows that both online
algorithms perform very good (and better than in the time-zone
scenario).
\begin{figure} [t]
\begin{center}
\includegraphics[width=0.85\columnwidth]{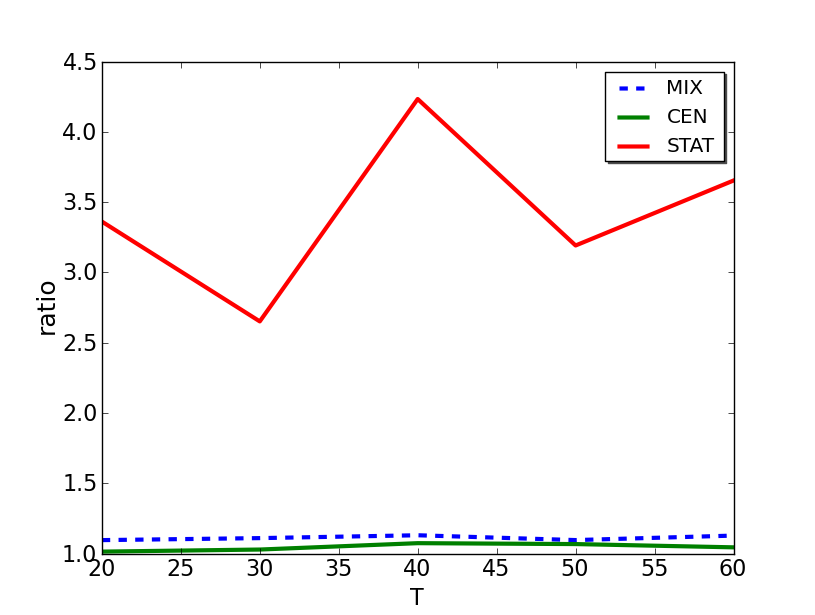}\\
\caption{Commuter scenario with $T=3$, server size 30MB
$\lambda=10$, runtime 200 rounds and averaged over 20
runs.}\label{fig:single_pip_commuter_freq_3}
\end{center}
\end{figure}
One takeaway from our experiments with the different scenarios is
that migration is relatively more beneficial in these instances of
the commuter scenario compared to the time zone instances studied.



Finally, a note on experiments on Rocketfuel topologies. Also there,
the competitive ratios were low. For instance, on the autonomous
system AS-7018 of \texttt{ATT} which consists of 115 nodes (see
Figure~\ref{fig:plot-rocketfuel_singlepip}), and averaging over 50
runs at a runtime of 100, the optimal offline cost is 477.905648298,
\RAND\ has cost of 1179.85 and \DET\ has cost 825.81, which implies
a competitive ratio is around 2.4 and 1.7, respectively. Also \STAT\
performs quite well: without migration the cost lies between \RAND\
and \DET: 1091.51.
\begin{figure} [t]
\begin{center}
\includegraphics[width=0.85\columnwidth]{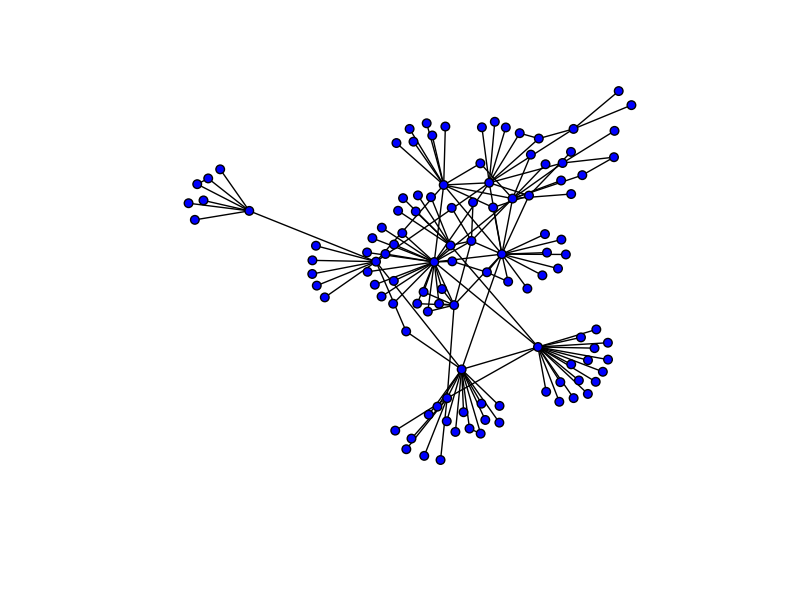}\\
\caption{\texttt{ATT} network AS-7018 with 115
nodes.}\label{fig:plot-rocketfuel_singlepip}
\end{center}
\end{figure}

\subsection{Inter Provider Migration}

We also conducted some experiments with multiple PIPs. Generally, we
used the same random topologies to model a single PIP network, and
connected different PIPs in a circular manner using a random
connection between adjacent providers.

As in the intra provider scenario, we first report on scalability.
Figure~\ref{fig:ratio_vs_no_of_nodes_multipip} plots the competitive
ratio as a function of the total number of nodes per provider, given
that there are three providers. The ratios are similar to the single
PIP case, $\PRAND_k$ is slightly worse than $\PDET_k$, and constant
bandwidth scenarios yield lower ratios.
\begin{figure} [t]
\begin{center}
\includegraphics[width=0.85\columnwidth]{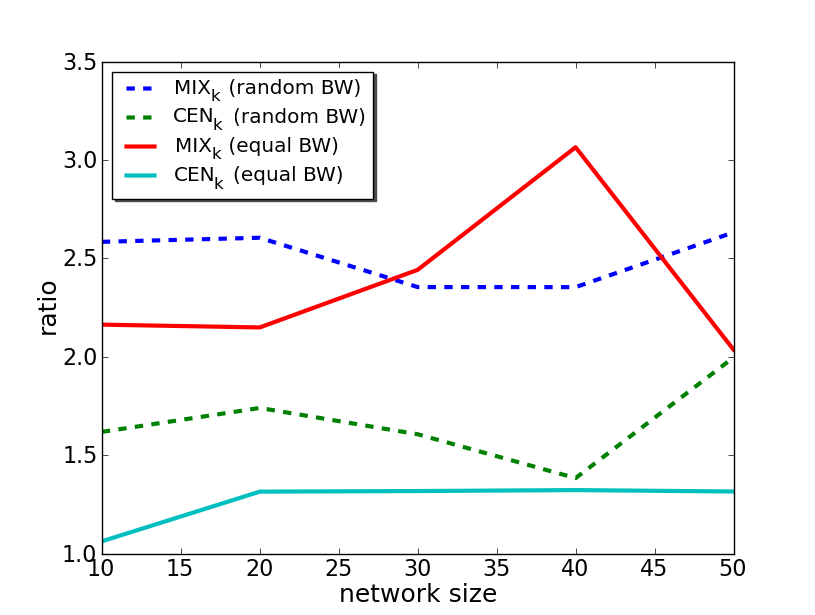}\\
\caption{Competitive ratio in time zone scenario (with $p=0\%$) with
three providers, server size 30MB, $\lambda=2$ and a runtime of 50
rounds. We average the ratio over five
runs.}\label{fig:ratio_vs_no_of_nodes_multipip}
\end{center}
\end{figure}

Figure~\ref{fig:multi_pip_det_time_zone} shows the effect of
different correlation ($p$ values) for $\PDET_k$, and
Figure~\ref{fig:multi_pip_rand_time_zone} studies the analogous
situation for $\PRAND_k$.
\begin{figure} [t]
\begin{center}
\includegraphics[width=0.85\columnwidth]{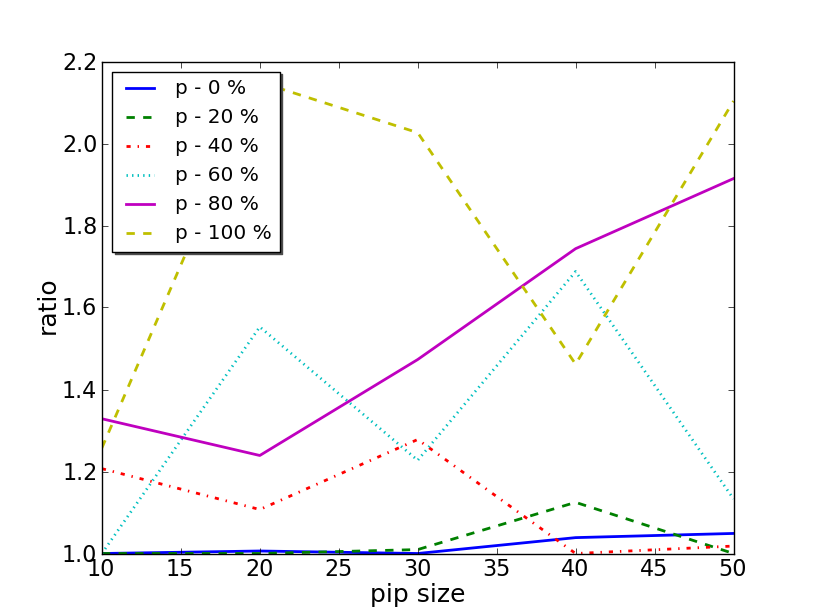}\\
\caption{Time zone scenario with three PIPs, $\lambda=5$, runtime
200 rounds: competitive ratio of $\PDET_k$ as a function of provider
size and $p$.}\label{fig:multi_pip_det_time_zone}
\end{center}
\end{figure}
\begin{figure} [t]
\begin{center}
\includegraphics[width=0.85\columnwidth]{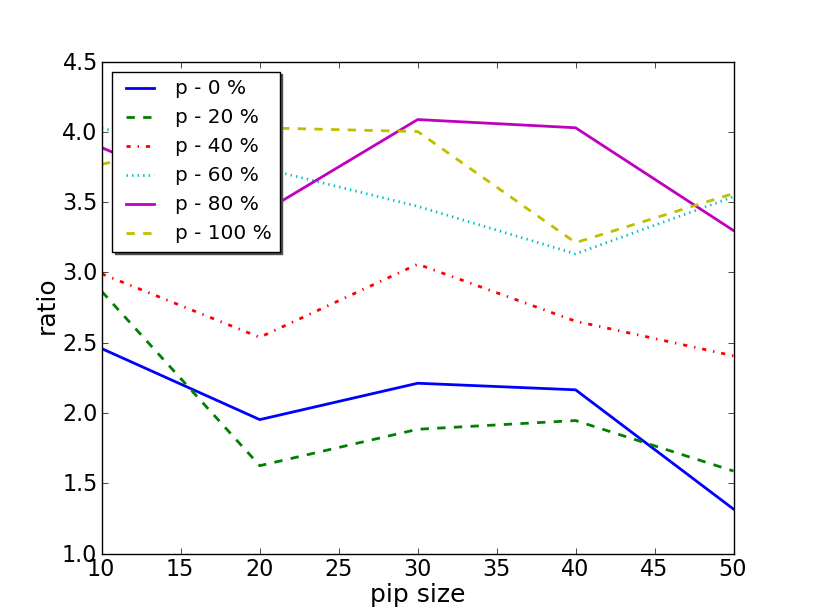}\\
\caption{Like Figure~\ref{fig:multi_pip_det_time_zone} but for
$\PRAND_k$.}\label{fig:multi_pip_rand_time_zone}
\end{center}
\end{figure}



In terms of dynamics, we also have a similar picture as in the
single provider case, see
Figure~\ref{fig:ratio_vs_lambda_multi_pip}.
\begin{figure} [t]
\begin{center}
\includegraphics[width=0.85\columnwidth]{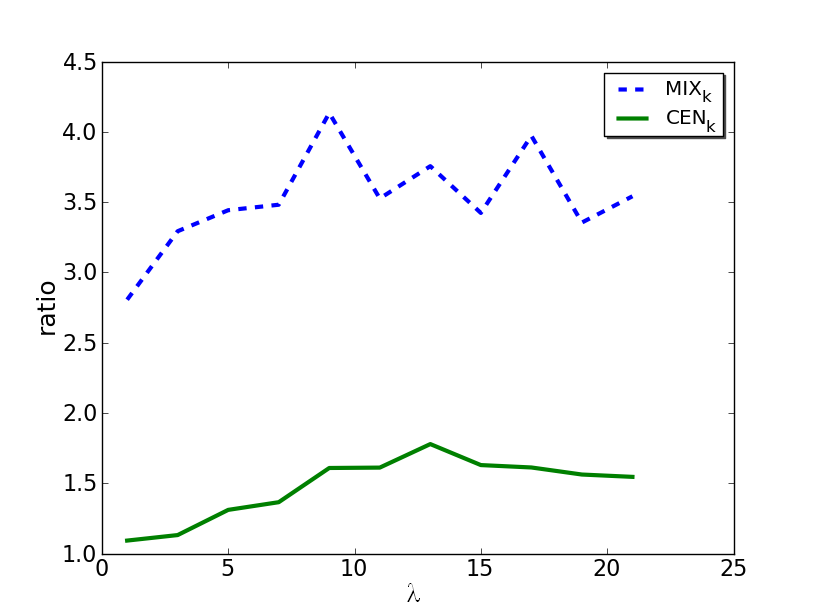}\\
\caption{Time zone scenario ($p=60\%$) for three providers with 20
nodes each, runtime 200 and averaged over 10
runs.}\label{fig:ratio_vs_lambda_multi_pip}
\end{center}
\end{figure}

It turns out that both $\PRAND_k$ and $\PDET_k$ are relatively
robust to $x=\pi/\beta$, the relative cost of transit compared to
migration, although $\PRAND_k$ slightly benefits from lower
migration costs, as we would expect (see
Figure~\ref{fig:multipip_ratio_vs_x}).
\begin{figure} [t]
\begin{center}
\includegraphics[width=0.85\columnwidth]{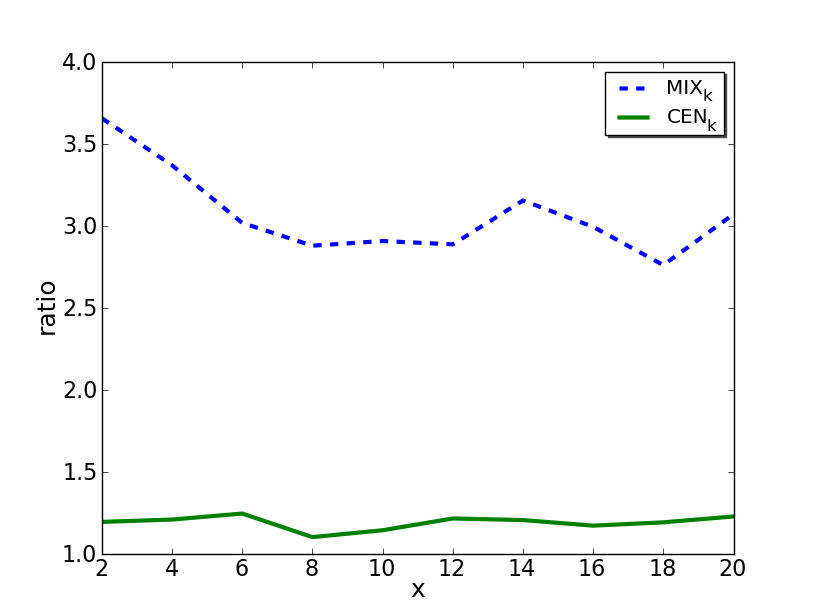}\\
\caption{Ratio as a function of $x=\pi/\beta$. Time zone scenario
($p=50\%$) with 3 PIPs, runtime 400, PIP size 20 nodes, $\lambda=5$,
averaged over 10 iterations.}\label{fig:multipip_ratio_vs_x}
\end{center}
\end{figure}

Finally, we have studied the competitive ratio as a function of the
total number of PIPs. Interestingly, as can be observed in
Figure~\ref{fig:ratio_vs_nop}, a medium number of providers is
slightly worse than scenarios with very few or many  providers, but
the ratio is pretty robust here as well.
\begin{figure} [t]
\begin{center}
\includegraphics[width=0.85\columnwidth]{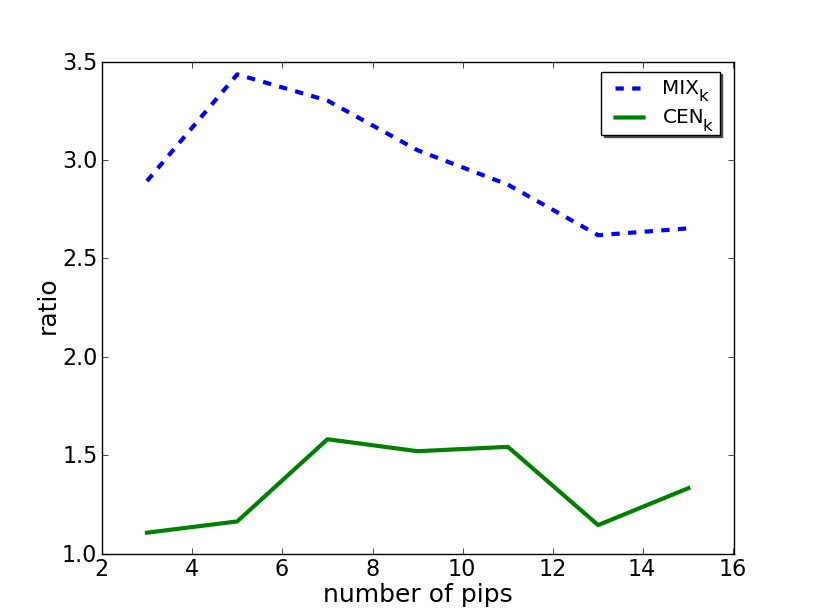}\\
\caption{Time zone scenario ($p=50\%$) runtime 400, PIP size 10
nodes, $\lambda=5$, averaged over 10
iterations.}\label{fig:ratio_vs_nop}
\end{center}
\end{figure}

In an experiment with Rocketfuel topologies (network shown
in~Figure~\ref{fig:plot-rocketfuel_multipip}), averaging over ten
runs, we obtain a total cost of 105.86 for \OPT, 356.89 for \RAND,
and 226.66 for \DET, yielding a competitive ratio of around four for
\RAND\ and two for \DET.
\begin{figure} [t]
\begin{center}
\includegraphics[width=0.85\columnwidth]{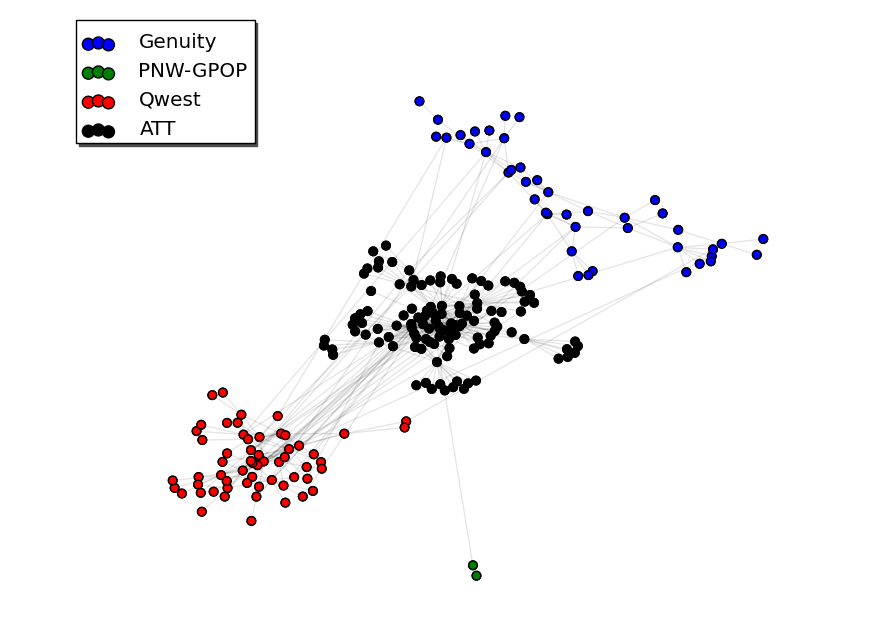}\\
\caption{Multi-PIP network with four U.S.~providers (217 nodes in
total) \texttt{Genuity} (AS-1), \texttt{PNW-GPOP} (AS-101),
\texttt{Qwest} (AS-209), and \texttt{ATT}
(AS-7018).}\label{fig:plot-rocketfuel_multipip}
\end{center}
\end{figure}

\section{Conclusion}\label{sec:conclusion}

At the heart of network virtualization lies the ability to react to
changing environments in a flexible fashion. In order to optimally
exploit the benefits from virtualization, algorithms need to be
designed that adapt dynamically to the current demand; this is
typically difficult as future demand is hard to predict. We believe
that competitive analysis is an important tool to devise and
understand such online algorithms.

This paper studied the cost-benefit tradeoff of online migration in
a system supported by network virtualization, and compared our
system to a setting without migration. We derived the first
migration algorithms, both for inter and intra provider scenarios,
which are competitive even in the worst-case.

We understand our work as a first step towards a better
understanding of competitive virtual service migration, and there
are several interesting directions for future research. For
instance, we believe that the bounds on the competitive ratio for
multiple PIPs are overly pessimistic and can be improved. We also
plan to study (simplified versions of) our algorithms in the wild,
i.e., in our prototype~\cite{visa09virtu} architecture.

Finally, we emphasize that while our formal considerations may give
insights into the benefits of this new technology, e.g., in terms of
improved quality of service, whether and how mobile network provider
will adapt such an approach also depends on many economic factors
that are not taken into account in our model.

\section*{Acknowledgments}

A preliminary version of this paper appeared at the 2010 ACM SIGCOMM
workshop VISA~\cite{visa10}.

Part of this work was performed within the 4WARD project, which is
funded by the European Union in the 7th Framework Programme (FP7),
the Virtu project, funded by NTT DOCOMO Euro-Labs and the
Collaborative Networking project funded by Deutsche Telekom AG. We
would like to thank our colleagues in these projects for many
fruitful discussions; in particular: Dan Jurca (now at Huawei
Technologies Duesseldorf GmbH), Wolfgang Kellerer, Ashiq Khan,
Kazuyuki Kozu, and Joerg Widmer (now at Institute IMDEA Networks).

Special thanks go to Ernesto Abarca who was a great help during the
prototype implementation. We also thank Johannes Grassler and Lukas
W\"ollner for their help with the prototype and the migration
demonstrator. M.~Bienkowski is supported by MNiSW grants number N
N206 368839, 2010--2013 and N N206 257335, 2008--2011.

\bibliographystyle{abbrv}
\bibliography{migration}

\begin{thebibliography}{10}

\bibitem{hotice11}
D.~Arora, A.~Feldmann, G.~Schaffrath, and S.~Schmid.
\newblock On the benefit of virtualization: Strategies for flexible server
  allocation.
\newblock In {\em Proc. USENIX Workshop on Hot Topics in Management of
  Internet, Cloud, and Enterprise Networks and Services (Hot-ICE)}, 2011.

\bibitem{dynpmsurvey}
Y.~Bartal.
\newblock Distributed paging.
\newblock In {\em Dagstul Workshop on On-line Algorithms}, pages 97--117, 1996.

\bibitem{Bavier06}
A.~Bavier, N.~Feamster, M.~Huang, L.~Peterson, and J.~Rexford.
\newblock In vini veritas: Realistic and controlled network experimentation.
\newblock In {\em ACM SIGCOMM}, 2006.

\bibitem{trellis08}
S.~Bhatia, M.~Motiwala, W.~Muhlbauer, V.~Valancius, A.~Bavier, N.~Feamster,
  L.~Peterson, and J.~Rexford.
\newblock Hosting virtual networks on commodity hardware.
\newblock Technical Report GT-CS-07-10, Georgia Tech, 2008.

\bibitem{visa10}
M.~Bienkowski, A.~Feldmann, D.~Jurca, W.~Kellerer, G.~Schaffrath, S.~Schmid,
  and J.~Widmer.
\newblock Competitive analysis for service migration in vnets.
\newblock In {\em Proc. 2nd ACM SIGCOMM VISA}, 2010.

\bibitem{swat10}
M.~Bienkowski and S.~Schmid.
\newblock Online function tracking with generalized penalties.
\newblock In {\em Proc. 12th Scandinavian Symposium and Workshops on Algorithm
  Theory (SWAT)}, 2010.

\bibitem{borodin}
A.~Borodin and R.~El-Yaniv.
\newblock {\em Online computation and competitive analysis}.
\newblock Cambridge University Press, 1998.

\bibitem{metricaltask}
A.~Borodin, N.~Linial, and M.~E. Saks.
\newblock An optimal on-line algorithm for metrical task system.
\newblock {\em J. ACM}, 39(4):745--763, 1992.

\bibitem{vee07}
R.~Bradford, E.~Kotsovinos, A.~Feldmann, and H.~Schiöberg.
\newblock Live wide-area migration of virtual machines including local
  persistent state.
\newblock In {\em Proc. VEE}, 2007.

\bibitem{infocom2009}
K.~Chowdhury, M.~R. Rahman, and R.~Boutaba.
\newblock Virtual network embedding with coordinated node and link mapping.
\newblock In {\em Proc. INFOCOM}, 2009.

\bibitem{polyvine}
M.~Chowdhury, F.~Samuel, and R.~Boutaba.
\newblock {PolyViNE}: Policy-based virtual network embedding across multiple
  domains.
\newblock In {\em Proc. 2nd ACM SIGCOMM Workshop on Virtualized Infrastructure
  Systems and Architecture (VISA)}, 2010.

\bibitem{virsurvey}
M.~K. Chowdhury and R.~Boutaba.
\newblock A survey of network virtualization.
\newblock {\em Elsevier Computer Networks}, 54(5), 2010.

\bibitem{ossification}
D.~D. Clark, J.~Wroclawski, K.~R. Sollins, and R.~Braden.
\newblock Tussle in cyberspace: Defining tomorrow's {Internet}.
\newblock In {\em Proc. SIGCOMM}, 2002.

\bibitem{itu09}
{Data from 2008}.
\newblock {International Telecommunications Union}, 2009.

\bibitem{moti10tr}
G.~Even, M.~Medina, G.~Schaffrath, and S.~Schmid.
\newblock Competitive and deterministic embeddings of virtual networks.
\newblock In {\em ArXiv Technical Report 1101.5221}, 2011.

\bibitem{ammar}
J.~Fan and M.~H. Ammar.
\newblock Dynamic topology configuration in service overlay networks: A study
  of reconfiguration policies.
\newblock In {\em Proc. INFOCOM}, 2006.

\bibitem{competitivefl}
D.~Fotakis.
\newblock On the competitive ratio for online facility location.
\newblock In {\em Proc. 30th International Conference on Automata, Languages
  and Programming (ICALP), also appeared in Algorithmica 50(1), pp. 1-57,
  2008}, pages 637--652, 2003.

\bibitem{secondnet}
C.~Guo, G.~Lu, H.~Wang, S.~Yang, C.~Kong, P.~Sun, W.~Wu, and Y.~Zhang.
\newblock Secondnet: A data center network virtualization architecture with
  bandwidth guarantees.
\newblock In {\em Proc. ACM CONEXT}, 2010.

\bibitem{visa09mig}
F.~Hao, T.~V. Lakshman, S.~Mukherjee, and H.~Song.
\newblock Enhancing dynamic cloud-based services using network virtualization.
\newblock {\em SIGCOMM Comput. Commun. Rev.}, 40(1):67--74, 2010.

\bibitem{oss}
U.~C. Kozat, Y.~Gwon, and R.~Jain.
\newblock An overlay server system (oss) platform for multiplayer online games
  over mobile networks.
\newblock In {\em Proc. Global Telecommunications Conference, 2006 (GLOBECOM)},
  2006.

\bibitem{georgiosmigration}
N.~Laoutaris, G.~Smaragdakis, K.~Oikonomou, I.~Stavrakakis, and A.~Bestavros.
\newblock Distributed placement of service facilities in large-scale networks.
\newblock In {\em IEEE INFOCOM}, 2007.

\bibitem{pb-embed}
J.~Lischka and H.~Karl.
\newblock A virtual network mapping algorithm based on subgraph isomorphism
  detection.
\newblock In {\em Proc. VISA}, pages 81--88, 2009.

\bibitem{turner}
J.~Lu and J.~Turner.
\newblock Efficient mapping of virtual networks onto a shared substrate.
\newblock In {\em Technical Report, WUCSE-2006-35, Washington University},
  2006.

\bibitem{rethinking}
J.~R. M.~Yu, Y.~Yi and M.~Chiang.
\newblock Rethinking virtual network embedding: {Substrate} support for path
  splitting and migration.
\newblock {\em ACM SIGCOMM Computer Communication Review}, 38(2):17--29, Apr
  2008.

\bibitem{openflow}
N.~McKeown, T.~Anderson, H.~Balakrishnan, G.~Parulkar, L.~Peterson, J.~Rexford,
  S.~Shenker, and J.~Turner.
\newblock Openflow: {Enabling} innovation in campus networks.
\newblock {\em SIGCOMM Comput. Commun. Rev.}, 38(2):69--74, 2008.

\bibitem{embedding}
B.~Monien and H.~Sudborough.
\newblock Embedding one interconnection network in another.
\newblock In {\em Computational Graph Theory}, 1990.

\bibitem{wembed}
K.-M. Park and C.-K. Kim.
\newblock A framework for virtual network embedding in wireless networks.
\newblock In {\em Proc. 4th International Conference on Future Internet
  Technologies (CFI)}, pages 5--7, 2009.

\bibitem{mobitopolo}
R.~Potter and A.~Nakao.
\newblock Mobitopolo: A portable infrastructure to facilitate flexible
  deployment and migration of distributed applications with virtual topologies.
\newblock In {\em Proc. 1st ACM Workshop on Virtualized Infrastructure Systems
  and Architectures (VISA)}, pages 19--28, 2009.

\bibitem{simannealing}
R.~Ricci, C.~Alfeld, and J.~Lepreau.
\newblock A solver for the network testbed mapping problem.
\newblock {\em SIGCOMM Comput. Commun. Rev.}, 33(2):65--81, 2003.

\bibitem{emb10TR}
G.~Schaffrath, S.~Schmid, and A.~Feldmann.
\newblock Generalized and resource-efficient {VNet} embeddings with migrations.
\newblock In {\em ArXiv Technical Report 1012.4066}, 2011.

\bibitem{visa09virtu}
G.~Schaffrath, C.~Werle, P.~Papadimitriou, A.~Feldmann, R.~Bless,
  A.~Greenhalgh, A.~Wundsam, M.~Kind, O.~Maennel, and L.~Mathy.
\newblock Network virtualization architecture: Proposal and initial prototype.
\newblock In {\em Proc. VISA}, 2009.

\bibitem{rocketfuel2}
N.~Spring, R.~Mahajan, and T.~Andersonr.
\newblock Quantifying the causes of path inflation.
\newblock In {\em Proc. SIGCOMM}, 2003.

\bibitem{rocketfuel}
N.~Spring, R.~Mahajan, D.~Wetherall, and T.~Anderson.
\newblock Measuring {ISP} topologies with rocketfuel.
\newblock {\em IEEE/ACM Trans. Netw.}, 12(1):2--16, 2004.

\bibitem{mbook2}
A.~Talukder and R.~Yavagal.
\newblock {\em Mobile Computing: Technology, Applications, and Service
  Creation}.
\newblock McGraw-Hill, 2006.

\bibitem{geni}
{US National Science Foundation}.
\newblock Global environment for network innovations {(GENI)}.
\newblock {\em http://www.geni.net/}, 2006.

\bibitem{soda09}
K.~Yi and Q.~Zhang.
\newblock Multi-dimensional online tracking.
\newblock In {\em Proc.~of the 19th ACM-SIAM Symp. on Discrete Algorithms
  (SODA)}, pages 1098--1107, 2009.

\bibitem{vmart}
F.~Zaheer, J.~Xiao, and R.~Boutaba.
\newblock Multi-provider service negotiation and contracting in network
  virtualization.
\newblock In {\em Proc. 12th IEEE/IFIP Network Operations and Management
  Symposium (NOMS 2010)}, 2010.

\bibitem{zhu06}
Y.~Zhu and M.~H. Ammar.
\newblock Algorithms for assigning substrate network resources to virtual
  network components.
\newblock In {\em Proc. INFOCOM}, 2006.

\end{thebibliography}

\end{document}